\newtheorem{theorem}{Theorem}[section]
\newtheorem{proposition}[theorem]{Proposition}
\newtheorem{lemma}[theorem]{Lemma}
\newtheorem{definition}[theorem]{Definition}
\newenvironment{proof}{\medskip \noindent {\em Proof:}}{\hfill $\square$ \\[2mm] \indent}
\numberwithin{equation}{section}
\newlength{\dinwidth}
\newlength{\dinmargin}
\renewcommand{\mathbb}[1]{\mathbbm{#1}}
\newcommand{\Cl}{\mathbbm{C}}
\newcommand{\Rl}{\mathbb{R}}
\newcommand{\Nl}{\mathbb{N}}
\definecolor{lightgray}{rgb}{0.8,0.8,0.8}
\newcommand{\Om}{\Omega}
\newcommand{\om}{\omega}
\newcommand{\te}{\theta}
\newcommand{\La}{\Lambda}
\newcommand{\A}{\mathcal{A}}
\newcommand{\B}{\mathcal{B}}
\newcommand{\C}{\mathcal{C}}
\newcommand{\M}{\mathcal{M}}
\newcommand{\K}{\mathcal{K}}
\newcommand{\NN}{\mathcal{N}}
\newcommand{\Hil}{\mathcal{H}}
\newcommand{\E}{\mathcal{E}}
\newcommand{\DD}{\mathcal{D}}
\newcommand{\UU}{{\mathscr U}}
\newcommand{\Slic}{\mathscr{S}}
\newcommand{\Zyl}{\mathscr{Z}}
\newcommand{\Ss}{\mathscr{S}}   
\newcommand{\OOO}{\mathscr{O}}
\newcommand{\phiti}{\tilde{\phi}}
\newcommand{\phihat}{\hat{\phi}}
\newcommand{\LGpo}{\mathcal{L}_+^\uparrow}
\newcommand{\LG}{\mathcal{L}}
\newcommand{\PG}{\mathcal{P}}
\newcommand{\SO}{\text{SO}}
\newcommand{\OO}{O}
\newcommand{\GL}{\text{GL}}
\newcommand{\supp}{\text{supp}\,}
\newcommand{\dom}{\mathrm{dom}\,}
\title{Wick rotation for quantum field theories on degenerate Moyal space(-time)}
\author{
	Harald Grosse$^a$,\;\; Gandalf Lechner$^a$\footnote{Supported by FWF project P22929--N16 ``Deformations of quantum field theories''.},\;\;
	Thomas Ludwig$^{b,c}$\footnote{Supported by IMPRS, Max-Planck-Institute for Mathematics in the Sciences, Leipzig}
	,\;\; Rainer Verch$^c$\\
	\vspace*{-1mm}\\
	{\small $^a$ Department of Physics, University of Vienna, 1090 Vienna, Austria}\\
	{\small $^b$ Max-Planck-Institute for Mathematics in the Sciences, 04103 Leipzig, Germany}\\
	{\small $^c$ Institute for Theoretical Physics, University of Leipzig, 04009 Leipzig, Germany}
}
\date{November 30, 2011}
\begin{document}
\maketitle

\begin{abstract}
	In this paper the connection between quantum field theories on flat noncommutative space(-times) in Euclidean and Lorentzian signature is studied for the case that time is still commutative. By making use of the algebraic framework of quantum field theory and an analytic continuation of the symmetry groups which are compatible with the structure of Moyal space, a general correspondence between field theories on Euclidean space satisfying a time zero condition and quantum field theories on Moyal Minkowski space is presented (``Wick rotation''). It is then shown that field theories transferred to Moyal space(-time) by Rieffel deformation and warped convolution fit into this framework, and that the processes of Wick rotation and deformation commute.
\end{abstract}

\section{Introduction}

Non-commutative spacetime draws its main motivation from the idea that it might avoid spacetime singularities, which are one of the inevitable consequences of classical gravity in general relativity \cite{HawkingEllis:1973}. Seen in this light, one can view non-commutative quantum field theory \cite{DoplicherFredenhagenRoberts:1995, DouglasNekrasov:2001, Szabo:2003}, {\em i.e.}, the investigation of quantum fields on non-commutative spacetime, as an initial step towards a quantum theory of gravity. However, one may also take the point of view that quantum fields, in particular interacting quantum fields, ought to be investigated on non-commutative spacetime in their own right. Motivation for this point of view can be drawn from indications that spacetime non-commutativity has a ``smoothing'' effect on short-distance singularities of quantum fields. This facilitates their renormalization and, eventually (or so one hopes), might render the construction of interacting quantum fields in physical spacetime dimension possible. In the light of the notorious difficulties to establish existence of interacting quantum fields in four-dimensional ``commutative'' Minkowski spacetime, such a possibility is a viable incentive.

One of the lessons of constructive quantum field theory is that it is often more efficient to attempt the construction of interacting quantum field theories on Euclidean space rather than on Minkowski spacetime \cite{GlimmJaffe:1987,Rivasseau:1991}. Then, once a Euclidean quantum field theory has been constructed, usually in terms of Schwinger functions on Euclidean space fulfilling a certain set of assumptions, one can rely on the celebrated Osterwalder-Schrader theorem \cite{OsterwalderSchrader:1973,OsterwalderSchrader:1975} which ensures that the Schwinger functions can be analytically continued from Euclidean time to physical time, resulting in Wightman functions on Minkowski spacetime. Thus, under suitable conditions, the Osterwalder-Schrader theorem establishes a one-to-one correspondence between Euclidean quantum field theories formulated in terms of Schwinger functions, and quantum field theories on Minkowski spacetime in the framework of Wightman functions. The purpose of the present work is to establish a similar correspondence for quantum field theories on Moyal-deformed Euclidean space, and Moyal-deformed Minkowski spacetime, respectively, albeit under the fairly restrictive assumption of ``keeping time commutative'' ({\em i.e.,} a degenerate Moyal-deformation).
\\
\\
The defining feature of Moyal space are the commutation relations
\begin{align}\label{eq:MoyalRelations}
	[X_\mu,X_\nu]
	=
	i \theta_{\mu \nu} \cdot 1
\end{align}
between its coordinates $X_0,...,X_{d-1}$, where $\te$ is a real, antisymmetric $(d\times d)$-matrix. As is well known, these commutation relations can be reformulated in terms of a noncommutative star product $f,g\mapsto f\times_\te g$ between suitable functions $f,g$ on $\Rl^d$, or, somewhat more generally, as a deformation of the product of an algebra carrying an automorphic action $\tau$ of the translation group $\Rl^d$, formally defined as\footnote{See \cite{Rieffel:1992} for a rigorous treatment of this product in the context of $C^*$-algebras, and \cite{LechnerWaldmann:2011} for a recent generalization to locally convex algebras.}
\begin{align}\label{eq:DeformedProduct}
	A\times_\te B
	&=
	\frac{1}{(2\pi)^d}\int dp\,dx\,e^{i(p,x)}\tau_{\te p}(A)\tau_x(B)
	\,.
\end{align}
Here $(\,\cdot\,,\,\cdot\,)$ is a non-degenerate properly normalized bilinear form on $\Rl^d$, such as the Euclidean  or Lorentzian inner product $(p,x)^\E = \sum_{\mu=0}^{d-1} p_\mu x_\mu$ respectively $(p,x)^\M = -p_0x_0+\sum_{\mu=1}^{d-1} p_\mu x_\mu$.

Deformed products of this type can be used to define field theories on Moyal space in several ways. For example, a popular approach consists in defining a field theory via a classical action functional, and then replacing all pointwise products in this action by star products, {\em i.e.}, use \eqref{eq:DeformedProduct} for an algebra of functions on $\Rl^d$ with pointwise product and $\tau$ the natural action of translations. Via perturbative quantization one then arrives at a corresponding field theory on noncommutative $\Rl^d$, either in a Euclidean or Minkowskian setting\footnote{Our account of that on the following lines is necessarily incomplete and should be seen as a sample as concerns literature cited.}.

In the Minkowskian situation, quite a number of works have investigated how Feynman rules and operator ordering prescriptions are to be modified in order to render a perturbatively constructed interacting theory unitary, starting out from a Moyal-deformed version of a free quantum field on Minkowski spacetime \cite{LiaoSibold:2002-1,LiaoSibold:2002-2,BahnsDoplicherFredenhagenPiacitelli:2002,BahnsDoplicherFredenhagenPiacitelli:2003,DenkSchweda:2003}. This question is highly non-trivial since the deformed product $\times_\theta$ destroys commutativity of the fields at spacelike separated spacetime points, resulting in a host of technical and conceptual difficulties.

Another approach to transferring field theories from $\Rl^d$ to Moyal spacetime consists in deforming field operators $\phi\mapsto\phi_\te$ directly \cite{GrosseLechner:2007}. With this method, known as ``warped convolution'' \cite{BuchholzSummers:2008}, any quantum field theory on Minkowski spacetime can be deformed to a Moyal-Minkowski spacetime, and the residual localization properties can be analyzed in a model-independent manner \cite{GrosseLechner:2008,BuchholzLechnerSummers:2010}.
At the mathematical level, this approach amounts to certain representations of algebras with deformed products of the form \eqref{eq:DeformedProduct}.

Other work is devoted to non-commutative external field scattering and the question if Moyal-deformed quantum field theories can be related to --- as yet, somewhat hypothetical --- quantum fields over certain Lorentzian spectral triples \cite{PaschkeVerch:2004,BorrisVerch:2008,Verch:2011}.
\\
\\
In the Euclidean setting, Feynman rules have first been worked out by Filk \cite{Filk:1996}. Initially progress was slow because of the phenomenon of UV/IR mixing in Moyal-deformed theories \cite{MinwallaVanRaamsdonkSeiberg:1999}. It was later found that this problem can be circumvented by modifying the Euclidean Lagrangean for a self-interacting theory through adding a term akin to an oscillator potential \cite{GrosseWulkenhaar:2005}. From this point on, Moyal-deformed Euclidean quantum field theory has made significant progress, particularly in constructing interacting models \cite{Rivasseau:2007, GrosseWulkenhaar:2009, Wang:2011}. Analogously as for usual, ``undeformed''  quantum field theory, it appears  that also for Moyal-deformed quantum field theory the construction of interacting models is often more efficient in the Euclidean setting than in the Minkowskian world.

On the other hand, for Moyal-deformed quantum field theory there are as yet no results like an Osterwalder-Schrader theorem which would allow one to pass, {\em e.g.}, from a model theory constructed in the Euclidean setting to a quantum field theory (Moyal-deformed, or other) on Minkowski spacetime. In view of the fact that the Wick rotation relies substantially on the covariance and locality properties of quantum field theories \cite{StreaterWightman:1964}, one might actually expect serious problems in generalizing it to a Moyal-deformed setting, where usually both these properties are significantly weakened. In particular in the case of an invertible deformation matrix $\theta$, the continuation of the $n$-point functions of a Moyal-Minkowski space field theory to imaginary times does not match correspondingly $\te$-deformed Euclidean Schwinger functions \cite{Bahns:2009}. The fact that the connection between the Euclidean and Minkowskian world is subtle indeed in the noncommutative setting is also witnessed by the ongoing discussion in models with oscillator terms in the action \cite{FischerSzabo:2008, Zahn:2010}.

Therefore, one may have doubts if a relation between Euclidean and Minkowskian field theories on Moyal space(-time) can be established at all by any sort of Wick-rotation. However, it turns out that a tight relation does exist between the algebras generated by Minkowskian and Euclidean versions of a Moyal-deformed quantum field, as long as ``time remains commutative''. This means that $\theta$, viewed as a linear map on $\mathbb{R}^d$, has a non-trivial kernel containing a (timelike) unit vector $e$. In order to establish that relation --- which is the purpose of this paper --- it has turned out instrumental to adopt the framework of the operator-algebraic approach to quantum field theory \cite{Haag:1996}.

Let us recall here some of the basic ingredients of the operator-algebraic approach; the full details required for this work will be described in the main body of the text. A quantum field theory on $d$-dimensional Minkowski spacetime is described by an operator algebra $\mathcal{M}$ with an action $\alpha^\M$ of the proper, orthochronous Poincar\'e group by automorphisms on $\mathcal{M}$. Quite importantly, $\mathcal{M}$ has a local sub-structure, {\em i.e.}, it is formed by local subalgebras subject to the conditions of isotony and locality, and the automorphisms $\alpha^\M_g$ are required to act covariantly with respect to that local sub-structure. We will describe the local sub-structure, whose precise form depends on the localization properties of the theory (wherein {\em e.g.} a theory on Moyal-Minkowski spacetime differs from a theory on usual Minkowski spacetime) in more detail in the next section. Furthermore, it is assumed that the theory possesses a vacuum state. Formally, given some localization region $O$ in $\mathbb{R}^d$, and denoting by $\mathcal{M}(O)$ the associated local subalgebra, one can think of $\mathcal{M}(O)$ as collecting all bounded functions of quantum fields $\phi^{\mathcal{M}}(x)$ with $x \in O$.\footnote{Strictly speaking, one has to pass to field operators smeared against test-functions to render this statement viable.} Hence, the $\mathcal{M}(O)$ are algebras formed by the quantum field observables which can be observed within the spacetime region $O$ (supposing the quantum fields under consideration are observable).

In an analogous manner, one can describe a Euclidean quantum field theory in the operator-algebraic setting. Here, a Euclidean quantum field theory is described by an operator algebra $\mathcal{E}$ together with an action $\alpha^\E$ of the Euclidean group by automorphisms of $\mathcal{E}$. Again, it is important that $\mathcal{E}$ has a local sub-structure, and that the $\alpha_g^{\mathcal{E}}$ act covariantly with respect to that local sub-structure. A further key ingredient is a reflection-positive functional $\sigma:\E\to\Cl$, which satisfies a specific positivity condition with respect to a chosen direction $e$ in Euclidean space. (Again, we will give a precise definition in the main body of this article.)

In this operator-algebraic setting, Schlingemann \cite{Schlingemann:1999-1}, drawing on results established by Fr\"ohlich, Osterwalder and Seiler \cite{FrohlichOsterwalderSeiler:1983}, and by Klein and Landau \cite{KleinLandau:1982}, has shown that from any Euclidean theory --- shortly denoted by $\mathcal{E}$ --- one can obtain a unique Minkowskian theory --- denoted by $\mathcal{M}$ in shorthand notation --- provided it is assumed that the Euclidean theory fulfills the {\em time-zero condition}. This condition demands essentially that there exists a sub-algebra $\E_0$ of operators in $\mathcal{E}$ which are ``localized on the hyperplane $e^\perp$'' and that $\E_0$ is large enough to generate the full observable algebra $\mathcal{E}$ via the Euclidean action $\alpha^\E$.

The just described setting of deriving a Minkowskian quantum field theory $\mathcal{M}$ from a Euclidean theory $\mathcal{E}$ by operator-algebraic methods is our starting point. Let $e$ be the unit vector in $\mathbb{R}^d$ entering in the definition of the reflection-positive functional $\sigma$. One can apply Schlingemann's procedure of ``algebraic Wick rotation'', obtaining a Minkowskian theory $\mathcal{M}$. Then let $\theta$ be a deformation matrix such that $\theta e = 0$, {\em i.e.}, $e$ lies in the kernel of $\theta$. By warped convolution, one can proceed to a corresponding deformed observable algebra $\widetilde{\mathcal{M}_\theta}$, describing the Minkowskian theory on Moyal space. This algebra still carries an automorphic action $\tilde{\alpha}^{\mathcal{M}}$ of the subgroup $\PG_\te(d)$ of the Poincar\'e group compatible with the deformation matrix $\te$, and the action is covariant with respect to the residual local sub-structure of $\widetilde{\mathcal{M}_\theta}$.

However, instead of first ``Wick-rotating'' a given Euclidean theory, and then ``Moyal-deforming'' the resulting Minkowskian theory, one can apply these procedures in the reverse order. Starting from a Euclidean theory $\mathcal{E}$, one can first obtain its deformed version $\mathcal{E}_\theta$ by deforming the operator product in $\mathcal{E}$ to the Rieffel-product \eqref{eq:DeformedProduct}. Since time is commutative, the same deformation matrix $\te$ can be used here for both signatures. As we will show, Schlingemann's ``Wick-rotation'' in the operator-algebraic setting can be extended to cover Moyal-deformed theories such as $\E_\te$, and yields a Minkowskian counterpart, denoted by ${\mathcal{M}_\theta}$, with an automorphic action $\alpha^\M$ of $\mathcal{P}_\theta(d)$, acting covariantly with respect to a residual local sub-structure of $\mathcal{M}_\theta$.

As $\widetilde{\M}_\te$, $\tilde{\alpha}^\M$, the data $\M_\te,\alpha^\M$ describe a Minkowskian Moyal-deformed version of the initial Euclidean theory $\E$, and we will show that $\mathcal{M}_\theta$ and $\widetilde{\mathcal{M}_\theta}$ are actually indistinguishable: There is an isomorphism $\varphi: \widetilde{\mathcal{M}_\theta} \to \mathcal{M}_\theta$ which intertwines the actions $\widetilde{\alpha}^{\mathcal{M}}$ and $\alpha^{\mathcal{M}}$. Moreover, $\varphi$ respects the residual local sub-structures of the theories $\mathcal{M}_\theta$ and $\widetilde{\mathcal{M}_\theta}$. Thus, the result of this work can be summarized by stating that the following diagram commutes:
\begin{diagram}
	&& \E && \\
	& \ldTo^{\text{\small deformation}} && \rdTo^{\text{\small Wick rotation}} & \\
	\E_\te &&&& \M \\
	\dTo^{\text{\small Wick rotation}} &&&& \dTo_{\text{\small deformation}} \\
	{\M_\te} &&\rDoubleArrow_\varphi && \widetilde{\M_\te}
\end{diagram}
Loosely speaking, this diagram says that, provided one assumes the time-zero condition and ``commutative time'', the Wick-rotation relation between a Euclidean and Minkowskian field theory carries over to their respective Moyal-deformed counterparts.

The content of this work is organized as follows. In Section 2, we give a precise formulation of the operator-algebraic setting of quantum field theories on Minkowski spacetime, and on Euclidean space, basically adopted from \cite{Schlingemann:1999-1, FrohlichOsterwalderSeiler:1983,KleinLandau:1982}. We also illustrate the connection between Euclidean quantum field theory in terms of Schwinger functions, or Euclidean path integrals, and the operator-algebraic setting. In Section~3, we show that Schlingemann's procedure of ``Wick-rotation'' in the algebraic setting can in fact be generalized to Euclidean theories having the structure a Moyal-deformed Euclidean theory would have; this refers in particular to weaker covariance properties. We draw largely on results of \cite{FrohlichOsterwalderSeiler:1983, KleinLandau:1982, JorgensenOlafsson:1999} in order to achieve that generalization. In Section 4 we apply the result of Section 3 and show that a Moyal-deformed Euclidean theory $\mathcal{E}_\theta$, with commutative time, can be algebraically Wick-rotated into a Minkowskian theory $\widetilde{\mathcal{M}_\theta}$, and we establish the commutative diagram indicated above. Conclusion and outlook complete this paper in Section 5.

\section{Euclidean and Minkowski nets of observables}\label{section:nets}

To prepare the ground for a model-independent analysis of the connection between deformed field theories on $\Rl^d$ with Euclidean and Lorentzian signature, we will in this section introduce a suitable operator-algebraic framework and discuss the relevant symmetry groups appearing in this context. For both signatures, the observables will be described by operator algebras ($C^*$-algebras and von Neumann algebras) $\E$ (Euclidean space) respectively $\M$ (Minkowski spacetime). Both $\E$ and $\M$ are required to have a covariant net structure in the sense of the following definition.

\begin{definition}\label{def:net}
	Given a manifold $M$, a family $\OOO$ of subsets of $M$, and a group $G$ of point transformations of $M$ which leaves $\OOO$ invariant, a $G$-covariant net $(\A,\OOO,\alpha)$ on $M$ is defined as the following structure. $\A$ is a map from $\OOO$ to $C^*$-algebras $\A(\OO)$ (respectively von Neumann algebras acting on a common Hilbert space $\Hil$), such that
	\begin{align}\label{isotony}
		\A(\OO_1)\subset\A(\OO_2)
		\;\,
		\text{for}
		\;\,
		\OO_1\subset \OO_2\,.
	\end{align}
	The smallest $C^*$-algebra (respectively von Neumann algebra) containing all $\A(\OO)$, $\OO\in\OOO$, is also denoted $\A$, and $\alpha$ is an automorphic action of $G$ on $\A$, such that
	\begin{align}\label{covariance}
		\alpha_g(\A(\OO)) = \A(g\OO)
		\,,\qquad
		g\in G\,,\;\OO\in\OOO\,.
	\end{align}
\end{definition}

In the standard algebraic formulation of quantum field theory \cite{Araki:1999, Haag:1996}, a quantum field theory on Minkowski space is viewed as a net $(\M,\OOO,\alpha^\M)$ on $\Rl^d$, with $\OOO$ the family of open bounded regions in $\Rl^d$, which is covariant under the Poincar\'e group $\PG(d)$, and furthermore satisfies locality (causality) in the usual form
\begin{align}\label{locality}
	[\M(\OO_1),\,\M(\OO_2)]=\{0\}
	\qquad\text{for}\qquad
	\OO_1\subset\OO_2'\,,
\end{align}
where $\OO_2'$ denotes the causal complement of $\OO_2$ with respect to the Minkowski metric $\eta=\rm{diag}(-1,+1,...,+1)$ on $\Rl^d$.

Euclidean field theories, on the other hand, are usually formulated in terms of their Schwinger functions or a measure generating these $n$-point functions (path integral) \cite{GlimmJaffe:1987}. Nonetheless, an operator-algebraic description is possible also in this case, viewing a Euclidean field theory as a net $(\E,\OOO,\alpha^\E)$ of $C^*$-algebras on $\Rl^d$, again with $\OOO$ as the open bounded subsets of $\Rl^d$, which is covariant under the Euclidean group $E(d)$ \cite{Schlingemann:1999-1}.

Concretely, the construction of a Euclidean net from a path integral can be sketched as follows. Adopting the framework of Glimm and Jaffe \cite{GlimmJaffe:1987}, we assume that there exists a probability measure $\mu$ on the space $\Ss':=\Ss'(\Rl^d,\Rl)$ of real distributions on Schwartz space, generating the Schwinger distributions $S_n\in\Ss'(\Rl^{nd})$, $n\in\Nl$, as its moments,
\begin{align}
	S_n(f_1\otimes ... \otimes f_n) = \int_{\Ss'} d\mu(\varphi)\,\varphi(f_1)\cdots\varphi(f_n)
	\,,\qquad
	f_1,...,f_n\in\Ss(\Rl^d)\,.
\end{align}
The measure $\mu$ is assumed to satisfy the standard properties \cite{GlimmJaffe:1987} of regularity, Euclidean invariance under the canonical action $\beta$ of the Euclidean group on $\Ss'$, and Osterwalder-Schrader reflection positivity \cite{OsterwalderSchrader:1973}.

To make contact with the algebraic formulation, we consider the ``Euclidean Hilbert space''
\begin{align}
	\Hil^\E := L^2(\Ss'\to\Cl,d\mu)\,,
\end{align}
which carries a representation $U^\E$ of $E(d)$,
\begin{align}
	(U^\E(g)F)(\varphi) := F(\beta_g\,\varphi)\,,\qquad F\in\Hil^\E,\;\varphi\in\Ss'\,.
\end{align}
The invariance and regularity of $\mu$ imply that $U^\E$ is unitary and weakly continuous. Defining the support $\supp F$ of a function $F\in\Hil^\E$ as the smallest closed subset $\Delta\subset\Rl^d$ such that $F(\varphi)=0$ for all $\varphi\in\Ss'(\Rl^d)$ with $\supp\varphi\cap\Delta=\emptyset$, it is also clear that $\supp U^\E(g)F=g\,\supp F$ for any $g\in E(d)$. Furthermore, the constant function $\Om^\E(\varphi):=1$, $\varphi\in\Ss'$, is a unit vector in $\Hil^\E$ because $\mu$ is a probability measure, and $\Om^\E$ is invariant under $U^\E$.

Concerning the Euclidean observables, the ``field'' acts as an unbounded multiplication operator $A(f)$, $f\in\Ss(\Rl^d)$, on the Euclidean Hilbert space via
\begin{align}
	(A(f) F)(\varphi) := \varphi(f)\cdot F(\varphi)\,,\qquad F\in\Hil^\E\,.
\end{align}
Typical elements of the Euclidean $C^*$-algebra are bounded multiplication operators such as $e^{i A(f)}$; and in fact, the Schwinger distributions can be recovered from the expectation values $\int d\mu(\varphi)\,\exp iA(f)$ under certain analyticity requirements on the measure \cite{GlimmJaffe:1987}. We take here the abelian $C^*$-algebra
\begin{align}
	\E := \bigvee\{\exp iA(f)\,:\,f\in \Ss_\Rl(\Rl^d)\}
\end{align}
as our definition of the Euclidean observable algebra. This algebra naturally has a net structure by taking $\E(\OO)\subset\E$ as the algebra of all operators multiplying with functions having support in $\OO\subset\Rl^d$. Furthermore, the Euclidean symmetry acts on this net by the automorphisms $\alpha^\E_g(A) := U^\E(g)A\,U^\E(g)^{-1}$, and this action is covariant in the sense that $\alpha^\E_g(\E(\OO))=\E(g\OO)$. Thus a measure generating a family of Schwinger distributions gives rise to an $E(d)$-covariant net $(\E,\OOO,\alpha^\E)$ of $C^*$-algebras on $\Rl^d$.
\\
\\
In the following, we will consider $G$-covariant nets $(\A,\OOO,\alpha)$ on $\Rl^d$ which have an additional feature. Fixing a unit vector $e\in\Rl^d$ (timelike for Minkowski signature), we write $e^\perp\subset\Rl^d$ for the hyperplane orthogonal to $e$, and introduce the {\em time zero algebras}
\begin{align}\label{time-zero-data}
	\A_0(K)
	:=
	\bigcap_{\OO\supset K} \A(\OO)
	\,,\qquad K\subset e^\perp
	\,.
\end{align}
If the time zero algebras and the action $\alpha$ generate the original net, {\em i.e.} if
\begin{align}
	\A(\OO)
	&=
	\bigvee_{K\subset e^\perp,\,g\in G\atop gK\subset \OO} \alpha_g(\A_0(K))
	\,,\qquad
	\OO\in\OOO\,,
	\label{time-zero}
\end{align}
we will speak of a net satisfying the {\em time zero condition}. Depending on the context, the symbol $\bigvee$ denotes either the $C^*$- or von Neumann algebra generated.

This assumption seems to be a strong condition and restricts the class of models we can analyze in our current setting. It is known that many free field models and models with polynomial self-interaction in dimension $d\leq3$ \cite{GlimmJaffe:1987} satisfy the time zero condition. However, it is unclear if there exist interacting quantum field theories in four space-time dimensions which comply with this condition. Since our construction method relies on the time-zero condition as in \cite{Schlingemann:1999-1}, we will need to assume it here.
\\
\\
As is well known, the road from Euclidean to Minkowski space field theories passes through a vacuum representation. Such a representation can be obtained from a $\PG(d)$-covariant net $(\M,\OOO,\alpha^\M)$ with the help of a vacuum state $\om$, and from an $E(d)$-covariant net $(\E,\OOO,\alpha^\E)$ with a reflection positive functional $\sigma$. This concept involves the reflection $r_e$ which inverts $e$, that is $r_e:x\mapsto x-2(e,x)^\E e$, with $(\,\cdot\,,\,\cdot\,)^\E$ the Euclidean inner product on $\Rl^d$, and is recalled in the following definition. As a shorthand, we write $\Rl_{>}^d:=e^\perp+\Rl_+\cdot e$ for the half space of $\Rl^d$ with positive $e$-coordinates.

\begin{definition}\label{def:states}
	Let $G^\E\subset E(d)$ and $G^\M\subset \PG(d)$ be subgroups such that $G^\M$ contains translations along the time direction $e$ and $G^\E$ is invariant under $g\mapsto r_e gr_e$. Let furthermore $(\E,\OOO,\alpha^\E)$ be a $G^\E$-covariant net on $\Rl^d$, and let $(\M,\OOO,\alpha^\M)$ be a $G^\M$-covariant net on $\Rl^d$.
	\begin{enumerate}
		\item A reflection positive functional on $\E$ is a continuous normalized linear functional $\sigma:\E\to\Cl$ such that
		\begin{itemize}
			\item $G^\E\ni g\longmapsto\sigma(A\alpha^\E_g(B))$ is continuous for all $A,B\in\E$,
			\item $\sigma\circ\alpha^\E_g=\sigma$ for all $g\in G^\E$,
			\item There exists an automorphism $\iota_e$ of $\E$ which acts covariantly, $\iota_e(\E(\OO))=\E(r_e\OO)$, $\OO\in\OOO$, such that $\iota_e\alpha^\E_g\iota_e=\alpha^\E_{r_egr_e}$ for all $g\in G^\E$, $\sigma\circ\iota_e=\sigma$, and
			\begin{align*}
					\sigma(\iota_e(A^*)A)\geq0
			\end{align*}
			for all $A\in\E_>:=\E(\Rl_>^d)$.
		\end{itemize}
		\item A vacuum state on $\M$ is a normalized, positive, linear functional $\om:\M\to\Cl$ such that
		\begin{itemize}
			\item $G^\M\ni g\longmapsto\om(A\alpha^\M_g(B))$ is continuous for all $A,B\in\M$,
			\item $\om\circ\alpha^\M_g = \om$ for all $g\in G^\M$,
			\item There is a weakly dense subset $\DD\subset\M$ such that $-i\left.\frac{d}{dt}\right|_{t=0}\om(A^*\alpha^\M_{t\cdot e,1}(A))$ exists and is non-negative for all $A\in\DD$.
		\end{itemize}
	\end{enumerate}
\end{definition}

Given an (undeformed) $E(d)$-covariant net $(\E,\OOO,\alpha^\E)$ with abelian $\E$ satisfying the time zero condition, and a reflection positive functional $\sigma$ on $\E$, a corresponding $\PG(d)$-covariant local net $(\M,\OOO,\alpha^\M)$ can be constructed \cite{Schlingemann:1999-1}. In the following, we will analyze this situation for the case of deformed field theories. Such models often have smaller symmetry groups than the full Euclidean respectively Poincar\'e groups\footnote{However, see \cite{DoplicherFredenhagenRoberts:1995, GrosseLechner:2007} for examples of fully covariant models.}. In fact, the basic commutation relations $[X_{\mu},X_{\nu}]=i\te_{\mu\nu}\cdot 1$ \eqref{eq:MoyalRelations} underlying Moyal space are invariant under all translations $X_\mu\mapsto X_\mu+x_\mu\cdot 1$, $x_\mu\in\Rl$, but only under those linear transformations $X_\mu\mapsto M_{\mu}^\nu X_\nu$, $M\in\GL(d)$, for which $M$ and $\te$ commute. We therefore introduce the (connected) {\em reduced Euclidean and Poincar\'e groups} as
\begin{align}
	E_\te(d)
	&:=
	\{(x,R)\,:\,(x,R)\in E(d)_0,\,\; R\te=\te R\}
	=
	\SO_\te(d)\ltimes\Rl^d
	\,,\\
	\PG_\te(d)
	&:=
	\{(x,\La)\,:\,(x,\La)\in \PG(d)_+^{\uparrow},\,\; \La\te=\te\La\}
	=
	\LG_\te(d)_+^\uparrow\ltimes\Rl^d
	\,,
	\label{eq:ReducedPoincareGroup}
\end{align}
and subsequently only consider models with such symmetry. In Section \ref{section:warping}, we will show how nets with reduced symmetry groups naturally appear as deformations of fully covariant models.

As a prerequisite for the analytic continuation between the Euclidean and Lorentzian setting, we now discuss the structure of $E_\te(d)$ and $\PG_\te(d)$ in more detail. As before, we fix a unit vector $e\in\Rl^d$ as a reference direction for reflection positivity in the Euclidean setting. After continuation to imaginary coordinates, this direction corresponds to the time direction in the Lorentzian setting, and since we want to study the case of space(times) with ``commutative time'', we assume that $e$ lies in the kernel of $\te$. Then $\te$ can be restricted to the hyperplane $e^\perp$, and we write $\vartheta:=\theta|_{e^\perp}$.

On $E_\te(d)$, the reflection $r_e:x\mapsto x-2(e,x)^\E e$ acts as an involutive automorphism $\gamma_e:~(x,R)\mapsto(r_e x,r_eRr_e)$. The subgroup $E_\te^e(d)\subset E_\te(d)$ of fixed points of $\gamma_e$ is isomorphic to $E_\vartheta(d-1)=\SO(d-1)_\vartheta\ltimes\Rl^{d-1}$, the reduced Euclidean group of $e^\perp\cong\Rl^{d-1}$ with noncommutativity $\vartheta$. Hence this involution induces a decomposition of the Lie algebra $\mathfrak{e}_\te(d)$ of $E_\te(d)$ into corresponding eigenspaces with eigenvalues $\pm1$,
\begin{align}
	\mathfrak{e}_\te(d)
	=
	\mathfrak{e}_\te^e(d) \oplus \mathfrak{m}_\te
	\,,
\end{align}
where $\mathfrak{e}_\te^e(d)=\mathfrak{e}_\vartheta(d-1)$ denotes the Lie algebra of $E_\te^e(d)\cong E_\vartheta(d-1)$. The pair $(\mathfrak{e}_\te(d),\mathfrak{e}_\te^e(d))$ has the structure of a symmetric Lie algebra,  {\em i.e.}, $[\mathfrak{e}_\te^e(d),\mathfrak{e}_\te^e(d)]\subset\mathfrak{e}_\te^e(d)$, $[\mathfrak{e}_\te^e(d),\mathfrak{m}_\te]\subset\mathfrak{m}_\te$, $[\mathfrak{m}_\te,\mathfrak{m}_\te]\subset\mathfrak{e}_\te^e(d)$ \cite{Helgason:1962}.

For a concrete description of these spaces, it is convenient to use orthonormal coordinates $(x_0,...,x_{d-1})$ of $\Rl^d$, with $e=(1,0,...,0)$. Let ${\sf p}_0,...,{\sf p}_{d-1}$ denote the corresponding generators of translations, and let ${\sf m}_{kl}$, $k<l$, $k,l=0,...,d-1$, denote the generators of rotations in the $x_k$-$x_l$-plane. Then $\mathfrak{e}_\te^e(d)$ is spanned by ${\sf p}_1,...,{\sf p}_{d-1}$ and all linear combinations of ${\sf m}_{kl}$, $k>0$, which commute with $\te$. The linear space $\mathfrak{m}_\te$ is spanned by ${\sf p}_0$ and all linear combinations of ${\sf m}_{0k}$, $k=1,...,d-1$, commuting with $\te$.

The dual symmetric Lie algebra of $\mathfrak{e}_\te(d)$ is defined as
\begin{align}
	\mathfrak{e}_\te(d)^*
	:=
	\mathfrak{e}_\te^e(d) \oplus i\mathfrak{m}_\te
	\,.
\end{align}
This is again a real Lie algebra, which is closely related to the reduced Poincar\'e group.

\begin{lemma}
	The connected, simply connected Lie group $E_\te(d)^*$ with Lie algebra $\mathfrak{e}_\te(d)^*$ is  the universal covering group of the reduced Poincar\'e group,
	\begin{align}
		E_\te(d)^*=\widetilde{\PG}_\te(d)\,.
	\end{align}
\end{lemma}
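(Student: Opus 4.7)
The plan is to build an explicit Lie algebra isomorphism $\Phi\colon\mathfrak{e}_\te(d)^*\to\mathfrak{p}_\te(d)$, where $\mathfrak{p}_\te(d)$ denotes the Lie algebra of $\PG_\te(d)$, and then to invoke the standard correspondence between connected, simply connected Lie groups and their Lie algebras. I would start by describing $\mathfrak{p}_\te(d)$ explicitly: as $\PG_\te(d)$ is the closed subgroup of $\PG(d)_+^\uparrow$ cut out by the algebraic relation $\La\te=\te\La$, its Lie algebra consists of all translations together with those Lorentz generators $L\in\mathfrak{so}(1,d-1)$ satisfying $[L,\te]=0$ as matrices on $\Rl^d$. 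Using $\te e=0$, this condition splits into a spatial part (linear combinations of ${\sf M}_{kl}$ with $k,l>0$ whose restriction to $e^\perp$ commutes with $\vartheta$) and a boost part (linear combinations of ${\sf M}_{0k}$ satisfying a condition on the index $k$ identical to the one picking out ${\sf m}_{0k}\in\mathfrak{m}_\te$).

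Next, I would define $\Phi$ on the generators listed in the paper by the identity on ${\sf p}_k$ and ${\sf m}_{kl}$ ($k,l>0$) spanning $\mathfrak{e}_\te^e(d)$, and by $i{\sf p}_0\mapsto{\sf P}_0$, $i{\sf m}_{0k}\mapsto{\sf M}_{0k}$ ($k>0$) on the dualized part $i\mathfrak{m}_\te$. This is precisely the ``Wick rotation'' of generators at the Lie algebra level: brackets inside $\mathfrak{e}_\te^e(d)$ pass unchanged, brackets of type $[\mathfrak{e}_\te^e(d),i\mathfrak{m}_\te]\subset i\mathfrak{m}_\te$ pick up a single factor of $i$ that is absorbed by the rescaling, and brackets of type $[i\mathfrak{m}_\te,i\mathfrak{m}_\te]\subset\mathfrak{e}_\te^e(d)$ pick up $i^2=-1$, which is exactly the sign flip converting the Euclidean metric $\delta_{\mu\nu}$ into the Minkowski metric $\eta_{\mu\nu}$ in the structure constants (schematically, $[{\sf m}_{0k},{\sf m}_{0l}]=-{\sf m}_{kl}$ versus $[{\sf M}_{0k},{\sf M}_{0l}]={\sf M}_{kl}$). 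Checking a few representative brackets is routine and suffices.

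The main point requiring care -- and the closest thing here to an obstacle -- is verifying that the admissibility constraint $[L,\te]=0$ is preserved bijectively by $\Phi$. Although ${\sf m}_{0k}$ and ${\sf M}_{0k}$ differ as matrices on $\Rl^d$ (the former antisymmetric, the latter symmetric in the $0k$-block), the relation $\te e=0$ implies that both generators move the same coordinate directions in and out of $\ker\te$: a direct check shows that $[{\sf M}_{0k},\te]=0$ if and only if $[{\sf m}_{0k},\te]=0$, while for spatial rotations no sign issue arises at all. This is what ensures that $\Phi$ restricts to a bijection between the admissible subspaces on both sides and hence is a Lie algebra isomorphism. Since $\PG_\te(d)$ is connected by definition, its Lie algebra is $\mathfrak{p}_\te(d)\cong\mathfrak{e}_\te(d)^*$, and $E_\te(d)^*$ is, by construction, the unique connected, simply connected Lie group with this Lie algebra; hence $E_\te(d)^*=\widetilde{\PG}_\te(d)$, with the covering map obtained by integrating $\Phi$.
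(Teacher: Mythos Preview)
Your proposal is correct and follows the same overall strategy as the paper: identify $\mathfrak{e}_\te(d)^*$ with the Lie algebra of $\PG_\te(d)$ and then invoke the uniqueness of the connected simply connected group with a given Lie algebra. The difference is one of economy. The paper treats the $\te=0$ case as known (citing \cite{FrohlichOsterwalderSeiler:1983, JorgensenOlafsson:1999}), so that $\mathfrak{e}_0(d)^*$ is already identified with the Poincar\'e Lie algebra; it then simply observes that $\mathfrak{e}_\te(d)^*\subset\mathfrak{e}_0(d)^*$ is the subalgebra of elements commuting with $\te$, which is by definition the Lie algebra of $\PG_\te(d)$. You instead rebuild the $\te=0$ isomorphism from scratch by writing down $\Phi$ on generators and checking brackets via the symmetric Lie algebra grading.

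What your approach buys is an explicit verification of the point the paper passes over in one clause: that the constraint ``commutes with $\te$'' matches on both sides. Since ${\sf m}_{0k}$ is $\delta$-antisymmetric while ${\sf M}_{0k}$ is $\eta$-antisymmetric, this is not entirely automatic; your observation that $\te e=0$ makes $[{\sf m}_{0k},\te]=0\Leftrightarrow[{\sf M}_{0k},\te]=0$ is exactly the check needed, and is worth spelling out. The paper's version is shorter but relies on the reader supplying this step.
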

\begin{proof}
	For $\te=0$, this fact is well known \cite{FrohlichOsterwalderSeiler:1983, JorgensenOlafsson:1999}. Now $\mathfrak{e}_\te(d)^*\subset\mathfrak{e}_0(d)^*$ consists of all translation generators ${\sf p}_\mu$, $\mu=0,...,d-1$, and those elements of $\mathfrak{so}_0(d)^*$ which commute with $\te$, {\em i.e.}, $\mathfrak{e}_\te(d)^*$ coincides with the Lie algebra of the reduced Poincar\'e group $\PG_\te(d)$ \eqref{eq:ReducedPoincareGroup}. Hence $E_\te(d)^*$ is the unique connected simply connected Lie group with the same Lie algebra as $\PG_\te(d)$, that is, the universal covering group $\widetilde{\PG}_\te(d)$.
\end{proof}

\noindent{\em Remark:} This result shows that the same noncommutativity $\te$ can be used consistently for both the Euclidean and Lorentzian signature, and can intuitively be understood on the level of the Moyal commutation relations: When setting up Euclidean Moyal space via coordinates $X^\E_0,...,X^\E_{d-1}$ satisfying the relations $[X^\E_\mu,X^\E_\nu]=i\te^\E_{\mu\nu}$, and similarly Minkowski Moyal space via coordinates $X^\M_0,...,X^\M_{d-1}$ with $[X^\M_\mu,X^\M_\nu]=i\te^\M_{\mu\nu}$, one might expect that by some sort of Wick rotation, $X^\M_0=iX^\E_0$, whereas the spatial coordinates $X^\E_k=X^\M_k$, $k=1,...,d-1$, can be identified. In case of commuting time,  {\em i.e.} $\te^\E_{0\mu}=\te^\M_{0\mu}=0$, $\mu=0,...,d-1$, this reasoning would then imply coinciding noncommutativity parameters $\te^\E=\te^\M=:\te$ for both signatures.
\\
\\
The relevance of representing $E_\te(d)$ and $\widetilde{\PG}_\te(d)$ as dual Lie groups (in the sense defined above) lies in the fact that certain (virtual) representations of a Lie group and its dual are connected by analytic continuation \cite{FrohlichOsterwalderSeiler:1983}, and will be used subsequently.

\section{From $\boldsymbol{E_\te(d)}$-covariant nets on Euclidean space to\\ $\boldsymbol{\PG_\te(d)}$-covariant nets on Minkowski spacetime}\label{section:symmetries}

We now describe how to pass from a Euclidean $E_\te(d)$-covariant net on Euclidean space $\Rl^d$ to a $\PG_\te(d)$-covariant net on Minkowski spacetime $\Rl^d$. This construction will proceed in three steps: First, we consider an abstract $E_\te(d)$-covariant net $(\E,\OOO,\alpha^\E)$ and a reflection positive functional $\sigma$ on $\E$ w.r.t. some reference direction $e$ in the kernel of $\te$. This net can be represented on a Hilbert space $\Hil^\M$ by a procedure analogous to the GNS representation. On $\Hil^\M$, a virtual representation $V$ (involving semi groups of unbounded operators) of $E_\te(d)$ exists, and in the second step, we will construct a unitary representation $U$ of $E_\te(d)^*=\widetilde{\PG}_\te(d)$ via analytic continuation. This construction is well-known in the commutative case \cite{KleinLandau:1982,FrohlichOsterwalderSeiler:1983,Seiler:1982}, and we show that it carries over to the situation considered here. The third step consists in constructing a $\widetilde{\PG}_\te(d)$-covariant net $(\M,\OOO,\alpha^\M)$ on Minkowski spacetime by exploiting the $\widetilde{\PG}_\te(d)$-representation constructed before, and the time zero condition, and is based on the article \cite{Schlingemann:1999-1} discussing the analogous commutative situation.
\\
\\
So let $(\E,\OOO,\alpha^\E)$ be a $E_\te(d)$-covariant net of $C^*$-algebras on $\Rl^d$ and let $e$ be a unit vector with $\te e=0$. We assume that this net satisfies the time zero condition w.r.t. $e$, and that also the reflection $r_e$ inverting $e$ is represented by an automorphism $\iota_e$ on $\E$. Furthermore, let $\sigma$ be a $E_\te(d)$-invariant functional on $\E$ which is reflection positive w.r.t.~$e$.

By reflection positivity,
\begin{eqnarray}
	A,B \longmapsto \sigma \left( \iota_e({A}^*) {B} \right)
	\,,\qquad
	A,B\in\E_>\,,
\end{eqnarray}
defines a positive semi--definite sesquilinear form on $\E_>$. In particular, $\overline{\sigma( \iota_e({A}^*) {B})}=\sigma(\iota_e(B^*)A)$, and the Cauchy-Schwarz inequality holds. Dividing $\E_>$ by the null space $\NN_\sigma:=\{A\in\E_>\,:\,\sigma(\iota_e(A^*)A)=0\}$ of this sesquilinear form therefore yields a pre-Hilbert space  $\DD:=\E_>/\NN_\sigma$, with quotient map and scalar product denoted by $A\mapsto[ A ]_\sigma$ and
\begin{eqnarray}\label{skpr}
	\langle[{A}]_\sigma , [{B}]_\sigma \rangle := \sigma \left( \iota_e(A^*) {B} \right),
\end{eqnarray}
respectively. The vacuum Hilbert space $\Hil^\M$ is defined as the completion of $\DD$ in the norm given by \eqref{skpr}, and $\Om:=[1]_\sigma$ will play the role of the vacuum vector.

As the time reflection $\iota_e$ acts covariantly, the time zero algebra $\E_0=\E(e^\perp)=\iota_e(\E_0)$ is invariant under this automorphism. Thus for $A\in\E_0$, we also have $\iota_e(A)^*A\in\E_0\subset\E_>$, which implies that we have a well-defined GNS-type representation $\pi_\sigma$ of $\E_0$ on $\Hil^\M$,
\begin{align}\label{def:PiSigma}
	\pi_\sigma(A)[B]_\sigma := [AB]_\sigma
	\,,\qquad A\in\E_0,\,B\in\E_>
	\,.
\end{align}
This representation will be used later to generate the Minkowski net.

As a prerequisite for that step, however, we first need to discuss the action of the reduced symmetry groups $E_\te(d)$ and $\PG_\te(d)$ on $\Hil^\M$. To represent $E_\te(d)$, the basic idea is to consider a sufficiently small neighborhood $\UU$ of the identity in $E_\te(d)$ and define representing operators $V(g)$ by
\begin{align}\label{def:V}
	V(g)[A]_\sigma
	:=
	[\alpha_g^\E(A)]_\sigma
	\,,
	\qquad
	g\in\UU\,,
\end{align}
for suitable $A\in\E_>$. More precisely, for a given $g\in\UU$, we consider all regions $\OO\subset\Rl^d_>$ such that both, $g\OO$ and $\gamma_e(g)^{-1}g\OO$, are still contained in $\Rl^d_>$. For $A\in \E(\OO)$, the right hand side of \eqref{def:V} then exists because by covariance and isotony, $\alpha_g^\E(A)\in\E(g\OO)\subset\E_>$. Furthermore, the above assignment is well-defined,  {\em i.e.}, independent of the choice of representative in $[A]_\sigma$: In fact, for $A\in\NN_\sigma$ we can use the $E_\te(d)$-invariance of $\sigma$ to compute
\begin{align}
	\|[\alpha^\E_g(A)]_\sigma\|^2
	&=
	\sigma(\iota_e(\alpha_g^\E(A))^*\alpha^\E_g(A))
	\nonumber
	\\
	&=
	\sigma(\alpha_{\gamma_e(g)}^\E(\iota_e(A))^*\alpha^\E_g(A))
	\nonumber
	\\
	&=
	\sigma(\iota_e(A)^*\alpha^\E_{\gamma_e(g)^{-1}g}(A))
	\,.
	\label{eq:CauchySchwarz}
\end{align}
According to our assumption on the region $\OO$, we have $\gamma_e(g)^{-1}g\OO\subset\Rl^d_>$, and hence $\alpha^\E_{\gamma_e(g)^{-1}g}(A)\in\E_>$. Thus the Cauchy-Schwarz inequality yields $[\alpha^\E_g(A)]_\sigma=0$, which shows that $V(g)$ is well-defined. The subspace of $\Hil^\M$ which is spanned by all $[\E(\OO)]_\sigma$, where $\OO$ runs over the described set of regions, will be taken as the domain $\dom V(g)$ of $V(g)$.

For a common domain of definition, we choose a particular region $\OO$. Working in an orthonormal basis of $\Rl^d$ with coordinates $x=(x_0,...,x_{d-1})$ such that $e=(1,0,...,0)$, we consider the cone
\begin{align}
	C
	:=
	\{x\in\Rl^d\,:\,x_0>1+(x_1^2+...+x_{d-1}^2)^{1/2}\}
	\,.
\end{align}
Clearly there exists a neighborhood $\UU\subset E_\te(d)$ of the identity such that $gC$ and $\gamma_e(g)^{-1}gC$ are both contained in $\Rl^d_>$ for all $g\in\UU$. Furthermore, we can choose $\UU$ so large that it contains the full spatial subgroup $E_\te^e(d)$ because $gC\subset\Rl^d_>$ for all $g\in E_\te^e(d)$. The subspace
\begin{align}
	\DD_0
	:=
	[\E(C)]_\sigma
	\subset
	\Hil^\M
\end{align}
will be used as domain for our virtual representation.

\begin{proposition}\label{proposition:VirtualRepresentation}
	The data $(\UU,\DD_0,V)$ form a virtual representation \cite{FrohlichOsterwalderSeiler:1983} of $E_\te(d)$, i.e.,
	\begin{enumerate}
		\item $\DD_0\subset\Hil^\M$ is dense and for all $g\in\UU$, one has $\DD_0\subset\dom V(g)$.
		\item If $g_1,g_2$ and $g_1g_2$ lie in $\UU$ and $\Psi\in\DD_0$, then $V(g_2)\Psi\in\dom V(g_1)$ and
			\begin{align*}
				V(g_1)V(g_2)\Psi=V(g_1g_2)\Psi\,.
			\end{align*}
		\item For $g\in E_\te^e(d)$, the operator $V(g)$ extends to a unitary on all of $\Hil^\M$.
		\item For $g=\exp {\sf m}\in\UU$ with ${\sf m}\in\mathfrak{m}_\te$, the operator $V(g)$ is hermitian.
		\item Let $\Psi\in\DD_0$. Then $\UU \ni g\mapsto V(g)\Psi$ is strongly continuous at the identity.
		\item The translations in $e$-direction $\{V(\exp t{\sf p}_0)\}_{t\geq0}$ form a contraction semi-group.
	\end{enumerate}
\end{proposition}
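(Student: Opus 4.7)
My plan is to dispatch the six items in order, relying throughout on four ingredients: the group law for $\alpha^\E$, the invariance $\sigma\circ\alpha^\E_g=\sigma$, the intertwining $\iota_e\,\alpha^\E_g=\alpha^\E_{\gamma_e(g)}\,\iota_e$ (a direct rewriting of the axiom $\iota_e\,\alpha^\E_g\,\iota_e=\alpha^\E_{r_e g r_e}$), and the continuity of $g\mapsto\sigma(A\alpha^\E_g(B))$. The first five items are of a ``soft'' nature; the last, the contraction semigroup property, is the only one requiring a genuine analytic estimate and will be the main obstacle.

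For (a), density of $\DD_0$: given $A\in\E(\OO)$ for some bounded $\OO\subset\Rl^d_>$, choose $t>0$ large enough that $\OO+te\subset C$. Then $[\alpha^\E_{te}(A)]_\sigma\in\DD_0$, and expanding $\|[\alpha^\E_{te}(A)]_\sigma-[A]_\sigma\|^2$ as a sum of four terms of the form $\sigma(X\alpha^\E_{te}(Y))$ and invoking the continuity of $\sigma$ shows that it tends to zero as $t\to 0^+$. The vectors $[A]_\sigma$ with $A\in\E(\OO)$, $\OO\subset\Rl^d_>$, span $\DD$, so $\DD_0$ is dense; the inclusion $\DD_0\subset\dom V(g)$ for $g\in\UU$ is built into the very definition of $\UU$. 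Item (b) then follows from $V(g)[A]_\sigma=[\alpha^\E_g(A)]_\sigma$ and the group law $\alpha^\E_{g_1}\alpha^\E_{g_2}=\alpha^\E_{g_1g_2}$, once one checks that $\alpha^\E_{g_2}(A)$ is supported in a region to which $V(g_1)$ can still be applied; this is arranged by shrinking $\UU$ if necessary so that $g_1,g_2,g_1g_2\in\UU$ forces both $g_1g_2C\subset\Rl^d_>$ and $\gamma_e(g_1)^{-1}g_1g_2C\subset\Rl^d_>$. Item (e) is obtained by expanding $\|(V(g)-1)\Psi\|^2$ as a sum of $\sigma$-expressions and applying continuity again.

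Items (c) and (d) exploit the two sectors of the $\gamma_e$-decomposition. For (c), $g\in E_\te^e(d)$ gives $\gamma_e(g)=g$, so in the norm computation already carried out above the factor $\alpha^\E_{\gamma_e(g)^{-1}g}$ collapses to the identity, showing $V(g)$ is isometric; since moreover $g\Rl^d_>=\Rl^d_>$, $V(g)$ is defined on all of $\DD$ with inverse $V(g^{-1})$, hence extends by continuity to a unitary on $\Hil^\M$. For (d), every $\mathsf{m}\in\mathfrak{m}_\te$ satisfies $\gamma_e(\mathsf{m})=-\mathsf{m}$, so $\gamma_e(g)=g^{-1}$ for $g=\exp\mathsf{m}$; a short computation using $\iota_e\,\alpha^\E_g=\alpha^\E_{g^{-1}}\,\iota_e$ together with $\sigma\circ\alpha^\E_g=\sigma$ gives
\begin{align*}
\langle V(g)[A]_\sigma,[B]_\sigma\rangle
=\sigma(\iota_e(\alpha^\E_g(A))^*B)
=\sigma(\iota_e(A)^*\alpha^\E_g(B))
=\langle[A]_\sigma,V(g)[B]_\sigma\rangle,
\end{align*}
so $V(g)$ is symmetric on $\DD_0$, i.e., hermitian.

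The core of the argument is (f). The semigroup identity $T(t+s)=T(t)T(s)$ on $\DD_0$ for small $t,s\geq 0$ is a special case of (b); it extends to all $t,s\geq 0$ once the operators are shown to be contractions. Combining hermiticity from (d) with the semigroup law yields the key inequality
\begin{align*}
\|T(t)\Psi\|^2=\langle\Psi,T(2t)\Psi\rangle\leq\|\Psi\|\cdot\|T(2t)\Psi\|,
\end{align*}
so that $f(t):=\|T(t)\Psi\|/\|\Psi\|$ satisfies $f(t)^{2^n}\leq f(2^n t)$ for every $n\in\Nl$. The missing ingredient is a uniform-in-$t$ bound on $f$, supplied by the elementary estimate $\|[B]_\sigma\|^2=|\sigma(\iota_e(B^*)B)|\leq\|\sigma\|\,\|B\|^2$: for $\Psi=[A]_\sigma$ this gives $\|T(t)\Psi\|\leq\|\sigma\|^{1/2}\|A\|$ independently of $t$. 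Feeding this uniform bound into $f(t)\leq f(2^n t)^{1/2^n}$ and sending $n\to\infty$ yields $f(t)\leq 1$, i.e., the desired contraction on $\DD_0$; continuous extension then produces a contraction semigroup on all of $\Hil^\M$. Putting hermiticity, the semigroup law, the uniform $C^*$-bound, and the iteration argument together is the technical hurdle I expect to require the most care.
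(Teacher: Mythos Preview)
Your treatment of (b)--(e) is correct and essentially matches the paper's. Your argument for (f) is a pleasant alternative: the paper invokes the Klein--Landau theorem on symmetric local semigroups to obtain a self-adjoint generator $H$, and then uses the uniform bound $\|e^{-tH}[A]_\sigma\|\le\|A\|_\E$ to conclude $H\ge 0$; your doubling inequality $f(t)^2\le f(2t)$ combined with the same $C^*$-bound yields the contraction property directly and more elementarily.

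However, your proof of (a) has a genuine gap. You want to choose $t$ large so that $\OO+te\subset C$ and hence $[\alpha^\E_{te}(A)]_\sigma\in\DD_0$, and then let $t\to 0^+$ to invoke continuity. These two requirements on $t$ are incompatible: for small $t$ the translated region $\OO+te$ is \emph{not} contained in the cone $C$ (whose apex sits at height $1$ above $e^\perp$), so for those $t$ the vector $[\alpha^\E_{te}(A)]_\sigma$ does not lie in $\DD_0$ at all. Continuity at $t=0$ therefore produces no approximants inside $\DD_0$.

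The paper's remedy is to prove (f) \emph{before} (a) and then feed it back. Once one has a positive self-adjoint generator $H$ with $V_1(t)=e^{-tH}$, for any $\Psi\in\DD_0^\perp$ the function $s\mapsto\langle\Psi,e^{-sH}[A]_\sigma\rangle$ extends holomorphically to the right half-plane; it vanishes for all $s$ large enough to push $\OO$ into $C$, hence by analytic continuation it vanishes at $s=0$, giving $\Psi\perp[A]_\sigma$ and thus $\DD_0^\perp=\{0\}$. Your route to (f) already yields (via bounded self-adjoint $V_1(t)$ forming a $C_0$-semigroup) such a positive self-adjoint generator, so you can splice this Reeh--Schlieder argument onto your own proof of (f); but some analytic-continuation step is genuinely needed here, and the naive continuity approach does not work.
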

\begin{proof}
	{\em b)} Let $g_1,g_2\in\UU$ such that also $g_1g_2\in\UU$. Then $g_1g_2C\subset\Rl^d_>$,  {\em i.e.} $V(g_2)\DD_0\subset\dom V(g_1)$. The group law $V(g_1)V(g_2)\Psi=V(g_1g_2)\Psi$, $\Psi\in\DD_0$, is clear from \eqref{def:V}.

	To establish {\em c)} and {\em d)}, we compute with $A,B\in\E(C)$ and $g\in\UU$
	\begin{align}\label{eq:hermitian}
		\langle V(g)[A]_\sigma,\,[B]_\sigma\rangle
		&=
		\sigma(\alpha_{r_e}^\E(\alpha_g^\E(A^*))B)
		=
		\sigma(\alpha^\E_{\gamma_e(g)}(\alpha_{r_e}^\E(A^*))B)
		\nonumber
		\\
		&=
		\sigma(\alpha_{r_e}^\E(A^*)\alpha^\E_{\gamma_e(g)^{-1}}(B))
		=
		\langle [A]_\sigma,\,V(\gamma_e(g)^{-1})[B]_\sigma\rangle
		\,,
	\end{align}
	yielding $V(g)^*\supset V(\gamma_e(g)^{-1})$. This shows that for a $\gamma_e$-invariant group element $g\in E_\te^e(d)$, we have $V(g)^*\supset V(g)^{-1}$, and once we have checked that $\DD_0$ is dense (part {\em a)}), it is clear that such $V(g)$ extend to unitaries on all of $\Hil^\M$. On the other hand, for $g=\exp {\sf m}\in\UU$, ${\sf m}\in\mathfrak{m}_\te$, we have $\gamma_e(g)=g^{-1}$, and thus the representing operator $V(g)$ is hermitian.

	{\em e)} Let $g\in\UU$ and $A\in\E(C)$. Then we compute as in \eqref{eq:CauchySchwarz}
	\begin{align*}
		\|V(g)[A]_\sigma-[A]_\sigma\|^2
		&=
		\sigma
			\left(
				\iota_e(A^*)
				\left(
				\alpha^\E_{\gamma_e(g)^{-1}g}(A)
				-
				\alpha^\E_{\gamma_e(g)^{-1}}(A)
				-
				\alpha^\E_g(A)
				+
				A
				\right)
			\right)
			\,.
	\end{align*}
	In the limit where $g$ approaches the identity in $E_\te(d)$, this norm difference vanishes because of the assumed continuity of $\sigma$ (Definition \ref{def:states} b)).

	{\em f)} For positive parameters $t\geq0$, the domain of the $e$-translations $V_1(t):=V(\exp t{\sf p}_0)$ is the dense subspace $\dom V_1(t)=\DD=\E_>/\NN_\sigma$. As the generator ${\sf p}_0$ of translations along $e$ lies in the $-1$ eigenspace $\mathfrak{m}_\te$ of $\gamma_e$, we find by application of {\em d)} that $V_1(t)$ is a hermitian operator for any $t\geq0$. As $\Rl_+\ni t\mapsto V_1(t)$ is strongly continuous on $\DD$ by the continuity of $\sigma$, it follows that $V_1$ is a symmetric local semi group \cite{KleinLandau:1981}. In particular, there exists a self-adjoint operator $H$ with $\DD\subset {\rm dom}\,e^{-tH}$ for any $t\geq0$, such that $V_1(t)=e^{-tH}$ on $\DD$. For $A\in\E_>$, $t\geq0$, we estimate
	\begin{align*}
		\|e^{-tH}[A]_\sigma\|
		=
		\|V_1(t)[A]_\sigma\|
		=
		\|[\alpha^\E_{\exp t{\sf p}_0}(A)]_\sigma\|
		\leq
		\|\alpha^\E_{\exp t{\sf p}_0 }(A)\|_\E
		=
		\|A\|_\E
		\,.
	\end{align*}
	Since this bound is uniform over all $t\geq0$, and $\DD\subset\Hil^\M$ is dense, it follows that $H$ is positive,  {\em i.e.}, $V_1$ is a contraction semi group.

	{\em a)} With this information, we can now prove that $\DD_0$ is a dense subspace of $\Hil^\M$ as well, using a Reeh-Schlieder type argument (see also \cite{Schlingemann:1999-1}).  For $\Psi\in \DD_0^\perp$ and $A\in\E(\OO)$ with some bounded $\OO\subset\Rl^d_>$, consider the function $f:\Rl\to\Cl$,
	\begin{eqnarray}
		f(s)
		:=
		\langle \Psi , e^{-sH}[A]_\sigma \rangle
		=
		\langle \Psi ,  [\alpha^\E_{\exp s{\sf p}_0}(A) ]_\sigma \rangle
		\,.
	\end{eqnarray}
	Since $H$ is positive and $V_1$ is strongly continuous, $f$ extends to a holomorphic function in the right half plane, with continuous boundary values. Moreover, as $\OO$ is bounded, there exists $s_0>0$ such that $\OO+s\cdot e\subset C$ and thus $\alpha^\E_{\exp s{\sf p}_0}(A)\in\E(C)$ for all $s\geq s_0$. Hence $f(s)=0$ for $s\geq s_0$, which by the analyticity of $f$ implies $0=f(0)= \langle \Psi , [A]_\sigma \rangle$, {\em i.e.}, $\Psi\perp[\E(\OO)]_\sigma$. But by definition of $\E_>$, the union of all $\E(\OO)$, where $\OO$ runs over all bounded regions in $\Rl^d_>$, is norm-dense in $\E_>$, and by construction of the Hilbert space, $[\E_>]_\sigma$ is a dense subspace of $\Hil^\M$. Hence $\Psi=0$, which proves the density of $\DD_0\subset\Hil^\M$. The fact that $\DD_0\subset\dom V(g)$ for all $g\in\UU$ is clear from the definitions of $\DD_0$ and the domains ${\rm dom}V(g)$.
\end{proof}

We are thus in the situation of a virtual representation of a class 2 symmetric space according to the terminology used in \cite[Thm.~3]{FrohlichOsterwalderSeiler:1983}. In that article it was shown that under the conditions verified in Proposition \ref{proposition:VirtualRepresentation}, $V$ can be analytically continued to a unitary representation $\tilde{U}$ of $E_\te(d)^*=\widetilde{\PG}_\te(d)$. In the concrete situation at hand, $\tilde{U}$ actually descends to a unitary representation $U$ of the reduced Poincar\'e group $\PG_\te(d)$ itself: For $\te=0$, this follows from the analysis in \cite{KleinLandau:1982}, where the analytic continuation of $V$ was carried out for $E_0(d)$ and shown to result in a unitary representation of $\PG_0(d)$ instead of the universal covering group. This feature then restricts to the reduced group $\PG_\te(d)\subset\PG_0(d)$ for $\te\neq0$.

Furthermore, the vacuum vector $\Om=[1]_\sigma\in\DD_0$ is invariant under all $V(g)$, $g\in\UU$, as can be seen from \eqref{def:V}. As $U$ is obtained from $V$ by analytic continuation, $\Om$ is invariant under the representation $U$ as well.

We summarize these observations in the following theorem.

\begin{theorem}
	There exists a strongly continuous unitary representation $U$ of $\PG_\te(d)$ on $\Hil^\M$ such that
	\begin{enumerate}
		\item $U(g)=V(g)$ for $g\in E_\te^e(d)\subset \PG_\te(d)$.
		\item $U(g)\Om=\Om$ for all $g\in\PG_\te(d)$.
	\end{enumerate}
\end{theorem}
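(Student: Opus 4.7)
The plan is to invoke the analytic continuation theorem of Fr\"ohlich--Osterwalder--Seiler on the virtual representation assembled in Proposition \ref{proposition:VirtualRepresentation}, and then to handle two remaining items: descent from the universal cover of $\PG_\te(d)$ to $\PG_\te(d)$ itself, and invariance of $\Om$.

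The first step is essentially a citation. Items (a)--(f) of Proposition \ref{proposition:VirtualRepresentation} are exactly the hypotheses of \cite[Thm.~3]{FrohlichOsterwalderSeiler:1983} for a virtual representation of a class~2 symmetric space: density of $\DD_0$, the local group law, unitarity on the spatial subgroup $E_\te^e(d)$, hermiticity in the $\mathfrak{m}_\te$-direction, strong continuity at the identity, and the contraction semigroup property for translations along $e$. That theorem therefore yields a unique strongly continuous unitary representation $\tilde{U}$ of the connected simply-connected group $E_\te(d)^* = \widetilde{\PG}_\te(d)$ whose restriction to $E_\te^e(d)$ coincides with $V$, the $\mathfrak{m}_\te$-generators being obtained via the spectral theorem from the self-adjoint extensions of the hermitian operators $V(\exp\mathsf{m})$ constructed from the local symmetric semigroups.

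The second step, which is where the real work lies, is to show that $\tilde{U}$ factors through the covering map $\widetilde{\PG}_\te(d)\to\PG_\te(d)$. The idea is that the fundamental group of $\PG_\te(d)$ is generated by a $2\pi$ loop in a spatial rotation plane, that is, by a loop lying entirely in the spatial subgroup $E_\te^e(d)$. On $E_\te^e(d)$, however, Proposition \ref{proposition:VirtualRepresentation}(c) together with (e) supplies a genuine strongly continuous unitary representation of the Lie group (not merely of its Lie algebra), so every such $2\pi$ rotation is already represented by the identity; consequently the kernel of $\widetilde{\PG}_\te(d)\to\PG_\te(d)$ lies in the kernel of $\tilde{U}$. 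Equivalently, one may embed $\PG_\te(d)\hookrightarrow\PG_0(d)$ and invoke \cite{KleinLandau:1982}, where descent to $\PG_0(d)$ is carried out explicitly for the undeformed case, and then restrict to the $\te$-compatible subgroup. Part (a) of the theorem, $U(g)=V(g)$ on $E_\te^e(d)$, is then built into the construction. The delicate point here is to verify that the $\te$-reduction does not enlarge $\pi_1$ in a way that obstructs the restriction argument; since the generator of $\pi_1(\PG_\te(d))$ sits inside the reduced spatial rotation group $\SO_\te(d-1)$ and the latter is represented by honest unitaries, this reduces to a topological fact about $\SO_\te(d-1)$.

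The third step, vacuum invariance, is essentially immediate. Because $\alpha_g^\E$ fixes the unit $1\in\E$, the definition \eqref{def:V} gives $V(g)\Om=\Om$ on $\DD_0$ for every $g\in\UU$; in particular, the hermitian $\mathfrak{m}_\te$-boost generators and the self-adjoint $E_\te^e(d)$-generators all annihilate $\Om$. Analytic continuation preserves this, so the exponentials assembling $U$ satisfy $U(g)\Om=\Om$ in a neighborhood of the identity, and hence on all of $\PG_\te(d)$ by connectedness.
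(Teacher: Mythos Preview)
Your proposal is correct and follows essentially the same route as the paper: invoke \cite[Thm.~3]{FrohlichOsterwalderSeiler:1983} on the virtual representation of Proposition~\ref{proposition:VirtualRepresentation} to obtain $\tilde U$ on $\widetilde{\PG}_\te(d)$, then descend to $\PG_\te(d)$ via the Klein--Landau analysis for $\te=0$ restricted to the subgroup $\PG_\te(d)\subset\PG_0(d)$, and finally read off $U\Om=\Om$ from $V(g)\Om=\Om$ on $\UU$ together with analytic continuation. Your alternative direct $\pi_1$-argument for the descent (the loop generating $\pi_1(\PG_\te(d))$ lies in $\SO_\te(d-1)\subset E_\te^e(d)$, where $V$ is an honest unitary representation) is a nice supplement and addresses exactly the point the paper leaves somewhat implicit, namely that passing to the reduced group does not introduce new covering obstructions.
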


The representation $U$ provides us in particular with a strongly continuous representation $x\mapsto U(x,1)$ of the translation group $\Rl^d$, generated by $P_0:=H$ and the Euclidean momentum operators $P_1,...,P_{d-1}$. The joint spectrum of these $d$ commuting selfadjoint operators can be characterized as follows.

\begin{proposition}
	The joint spectrum $S$ of the generators $P_0,...,P_{d-1}$ of the translations $U(x,1)=e^{iP\cdot x}$ is a $\PG_\te(d)$-invariant subset of $\Rl^d$ satisfying $\{p_0 : (p_0,...,p_{d-1})\in S\}\subset\Rl_+$ and $0\in S$.
\end{proposition}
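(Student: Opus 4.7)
The plan is to verify each of the three claims separately, all as straightforward consequences of properties of the representation $U$ already established.

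For the $\PG_\te(d)$-invariance of $S$, I would use the group multiplication law $(x,\La)(y,1)(x,\La)^{-1}=(\La y,1)$ in $\PG_\te(d)$. Applying $U$ yields $U(x,\La)e^{iP\cdot y}U(x,\La)^{-1}=e^{iP\cdot \La y}$, and differentiating in $y$ gives $U(x,\La)P_\mu U(x,\La)^{-1}=(\La^{-1})^\nu{}_\mu P_\nu$. Hence the spectral projections of $P$ are mapped to spectral projections of $\La^{-1}P$, so the joint spectrum $S\subset\Rl^d$ satisfies $\La S=S$ for all $\La\in\LG_\te(d)_+^\uparrow$; translation invariance of $S$ is automatic because $U(x,1)$ commutes with all $P_\mu$. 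Thus $S$ is $\PG_\te(d)$-invariant.

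For the positivity of the spectrum in the $e$-direction, I would invoke part (f) of Proposition \ref{proposition:VirtualRepresentation}: the contraction semi-group $\{V_1(t)\}_{t\geq0}$ extends to the unitary one-parameter group $\{U(te,1)\}_{t\in\Rl}=\{e^{itP_0}\}_{t\in\Rl}$, and its self-adjoint generator $P_0=H$ is non-negative. By the spectral theorem, $\sigma(P_0)\subset\Rl_+$, and since the projection of the joint spectrum $S$ on the first coordinate is contained in $\sigma(P_0)$, the assertion $\{p_0:(p_0,\ldots,p_{d-1})\in S\}\subset\Rl_+$ follows.

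Finally, to see $0\in S$, note that the vacuum vector $\Om=[1]_\sigma$ is fixed by the representation $U$ (as recorded in the theorem preceding the proposition). In particular $U(x,1)\Om=\Om$ for all $x\in\Rl^d$, so differentiating at $x=0$ gives $P_\mu\Om=0$ for $\mu=0,\ldots,d-1$. Hence $\Om$ is a joint eigenvector of $P_0,\ldots,P_{d-1}$ with eigenvalue $0\in\Rl^d$, which therefore belongs to the joint spectrum $S$. None of these steps presents any real obstacle; everything reduces to standard spectral-theoretic arguments once the results established above are in hand.
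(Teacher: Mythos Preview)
Your proof is correct and follows essentially the same approach as the paper's own proof: invariance of the joint spectrum under $\PG_\te(d)$ from the fact that $U$ is a representation of this group, positivity of $P_0$ from Proposition~\ref{proposition:VirtualRepresentation}~{\em f)}, and $0\in S$ from the $U$-invariance of $\Om$. The paper's argument is terser (it simply notes each of these three facts in one line), while you spell out the standard spectral-theoretic reasoning, but there is no substantive difference.
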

\begin{proof}
	As $U(x,1)$ extends to a representation of ${\PG}_\te(d)$, the joint spectrum of its generators is a $\PG_\te(d)$-invariant subset of $\Rl^d$. Moreover, $P_0$ is positive by Proposition~\ref{proposition:VirtualRepresentation}~{\em f)}, and hence $\{p_0 : (p_0,...,p_{d-1})\in S\}={\rm spec} P_0\subset\Rl_+$. We have $0\in S$ because $\Om$ is a invariant under translations.
\end{proof}

\noindent{\em Remark:} For $\te=0$, the above restrictions on the spectrum imply the well-known spectrum condition, stating that $S$ is a subset of the closed forward lightcone. However, depending on the space-time dimension $d$ and the form of the noncommutativity $\te\neq0$, the shape of $S$ is less restricted in general. For example, if $d>2$ is even and $\ker\te=\{(p_0,p_1,0,...,0)\,:\,p_0,p_1\in\Rl\}$, the boosts $\La_1(\beta)$ in $x_1$-direction lie in\footnote{In fact, this is also the case for suitable $\te$ of full rank.} $\PG_\te(d)$, and it is easy to see that $(\La_1(\beta)p)_0=\cosh(\beta)p_0+\sinh(\beta)p_1$ is negative for certain $\beta$ if $p_0<|p_1|$. Hence in this case, we only get $S\subset Y$, where $Y$ denotes the closed {\em lightwedge}
\begin{align*}
	Y
	:=
	\{p\in\Rl^d\,:\,p_0\geq|p_1|\}
	\,.
\end{align*}
Such a form of the spectrum has also been considered in other discussions of quantum field theory on non-commutative spacetimes \cite{AlvarezGaumeVazquezMozo:2003}.
\\
\\
Having constructed a suitable Hilbert space $\Hil^\M$ and representation $U$ of the reduced Poincar\'e group on it, we now proceed to step three of our construction and discuss how to define a ${\PG}_\te(d)$-covariant net $(\M,\OOO,\alpha^\M)$ on $\Hil^\M$, where
\begin{align}
	\alpha^\M_g(A)
	:=
	U(g)AU(g)^{-1}
	\,,\qquad
	g\in {\PG}_\te(d)
	\,,
\end{align}
denotes the adjoint action of $U$ by automorphisms of $\B(\Hil^\M)$, and $\OOO$ is the family of all open bounded regions in $\Rl^d$.

We first recall the representation $\pi_\sigma$ \eqref{def:PiSigma} of the time zero algebra $\E_0=\E(e^\perp)$ on $\Hil^\M$. By the time zero condition on the net $\E$, this algebra is in particular non-trivial. We now define for any open bounded region $\OO\subset\Rl^d$
\begin{align}\label{def:net-M}
	\M(\OO)
	:=
	\bigvee_{K \subset e^\perp,\,g\in\PG_\te(d)\atop gK\subset \OO}
		\alpha^\M_g(\pi_\sigma(\E_0(K))
\end{align}
as the smallest von Neumann algebra in $\B(\Hil^\M)$ containing the operators $\alpha^\M_g(\pi_\sigma(A))$ for all $A\in\E_0(K)$, $K\subset e^\perp$, $g\in \PG_\te(d)$ such that $gK\subset \OO$. Denoting by $\OOO$ the family of all open bounded regions in $\Rl^d$, we find the following main result of this section.

\begin{theorem}
	\begin{enumerate}
		\item The above constructed algebras $\M(\OO)$ \eqref{def:net-M} and action $\alpha^\M$ form a $\mathcal P_\theta(d)$-covariant net $(\M,\OOO,\alpha^\M)$ of von Neumann algebras on Minkowski spacetime which satisfies the time zero condition.
		\item The state $\om(A):=\langle\Om,A\Om\rangle$ is a $\PG_\te(d)$-invariant vacuum state on $\M$.
	\end{enumerate}
\end{theorem}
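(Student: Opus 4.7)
The theorem splits into three net-theoretic statements (isotony, covariance, time-zero condition) for part (a) and four state properties for part (b): normalization with positivity, $\PG_\te(d)$-invariance, continuity, and the spectrum/vacuum condition. All but the last are essentially formal consequences of the definition \eqref{def:net-M} and the previously established $U$-covariance and $U$-invariance of $\Om$; only the vacuum condition requires genuine work.

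Isotony is immediate, since enlarging $\OO$ enlarges the admissible index set in \eqref{def:net-M}. For covariance, applying $\alpha^\M_h$ with $h\in\PG_\te(d)$ commutes with the von Neumann closure and yields the join of $\alpha^\M_{hg}(\pi_\sigma(\E_0(K)))$ over $gK\subset\OO$; substituting $g':=hg$ turns the constraint into $g'K\subset h\OO$, giving $\alpha^\M_h(\M(\OO))=\M(h\OO)$. The time-zero condition combines the inclusion $\pi_\sigma(\E_0(K))\subset\M_0(K)$ (take $g=\mathrm{id}$ in \eqref{def:net-M} for every open $\OO'\supset K$) with the converse $\alpha^\M_g(\M_0(K))\subset\M(\OO)$ whenever $gK\subset\OO$: continuity of $g$ produces a bounded open $\OO'$ with $K\subset\OO'\subset g^{-1}\OO$, and covariance together with isotony then give $\alpha^\M_g(\M(\OO'))=\M(g\OO')\subset\M(\OO)$.

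For part (b), $\|\Om\|^2=\sigma(\iota_e(1)\cdot 1)=\sigma(1)=1$ makes $\om$ a state, $\PG_\te(d)$-invariance of $\om$ follows from $U(g)\Om=\Om$, and continuity of $g\mapsto\om(A\alpha^\M_g(B))=\langle A^*\Om,U(g)B\Om\rangle$ from strong continuity of $U$. Since $U(te,1)\Om=\Om$ for all $t$, the self-adjoint generator $H:=P_0$ annihilates $\Om$, so
\begin{align*}
	\om(A^*\alpha^\M_{te,1}(A))=\langle A\Om,e^{itH}A\Om\rangle,
\end{align*}
and the spectral theorem together with Proposition~\ref{proposition:VirtualRepresentation}~f) give
\begin{align*}
	-i\frac{d}{dt}\bigg|_{t=0}\om(A^*\alpha^\M_{te,1}(A))=\|H^{1/2}A\Om\|^2\geq0
\end{align*}
whenever $A\Om\in\dom H^{1/2}$.

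The only subtlety is providing a weakly dense $\DD\subset\M$ of operators satisfying this domain condition. The standard smearing device does the job: for $A\in\M$ and $f\in\Ssd$ with $\hat f$ compactly supported,
\begin{align*}
	A_f:=\int_\Rl f(t)\,\alpha^\M_{te,1}(A)\,dt
\end{align*}
is a sigma-weakly convergent integral whose value lies in $\M$ (since $\M$ is a von Neumann algebra closed under such integrals), and $A_f\Om=\int f(t)e^{itH}A\Om\,dt$ lies in the bounded spectral subspace of $H$ corresponding to $\supp\hat f$, hence in $\dom H$. Letting $f$ approach $\delta$ weakly in $t$ shows $\DD:=\{A_f\}$ is weakly dense in $\M$, completing the verification. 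I expect this smearing construction -- in particular, justifying the sigma-weak integral and the domain behavior of the unbounded generator $H$ -- to be the main technical step; everything else is algebraic bookkeeping.
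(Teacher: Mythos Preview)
Your proof is correct and follows essentially the same approach as the paper's. You are more explicit in two places where the paper is terse: the time-zero condition (which the paper dispatches with ``satisfied by the very construction of the net $\M$'') and the dense subalgebra for the vacuum condition (where the paper simply invokes the existence of a strongly dense smooth subalgebra $\M^\infty\subset\M$ and computes $-i\frac{d}{dt}|_{t=0}\om(A^*\alpha^\M_{te,1}(A))=\langle A\Om,P_0A\Om\rangle\geq0$, whereas you build $\DD$ by explicit one-dimensional smearing with compactly supported $\hat f$); both are standard mollifying arguments, so there is no genuine difference in strategy.
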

\begin{proof}
{\em a)} By definition, the $\M(\OO)$ are von Neumann algebras, and $\alpha^\M={\rm Ad}\,U$ is an automorphic $\PG_\te(d)$-action on $\B(\Hil^\M)$. According to Definition \ref{def:net}, we have to show that isotony and covariance holds.

Regarding the first property, let $\OO_1 \subset \OO_2$ be an inclusion of two regions in $\Rl^d$, and $K\subset e^\perp$ such that there exists $g\in\PG_\te(d)$ with $gK\subset\OO_1$. Then clearly also $gK\subset\OO_2$, and hence any $\alpha^\M_g(\pi_\sigma(A))\in\M(\OO_1)$, $A\in\E_0(K)$, is contained in $\M(\OO_2)$ as well. As these operators generate $\M(\OO_1)$, isotony holds, {\em i.e.}, $\M(\OO_1)\subset\M(\OO_2)$.

Regarding covariance under $\alpha^\M$, let $\OO\subset\Rl^d$ be open, $K\subset e^\perp$, $A\in\E_0(K)$, and $g\in\PG_\te(d)$ such that $gK\subset\OO$. Then $B:=\alpha^\M_g(\pi_\sigma(A))\in\M(\OO)$, and for any $h\in\PG_\te(d)$, we have $\alpha^\M_h(B)=\alpha^\M_{hg}(\pi_\sigma(A))$. Since clearly $hg K\subset h\OO$, we have $\alpha^\M_h(B)\in\M(h\OO)$, and as $\M(\OO)$ is generated by operators of this form, also  $\alpha^\M_h(\M(\OO))\subset\M(h\OO)$. By also considering the inverse transformation $h^{-1}$, we arrive at the covariance property $\alpha^\M_h(\M(\OO))=\M(h\OO)$.

The time zero condition is satisfied by the very construction of the net $\M$.

{\em b)} Since $\Om$ is a $U$-invariant unit vector, it is clear that $\om$ is an $\alpha^\M$-invariant state on $\B(\Hil^\M)$ and thus on $\M\subset\B(\Hil^\M)$, too. Furthermore, as $U$ is strongly continuous,
\begin{align*}
	\PG_\te(d) \ni g \longmapsto\om(A\alpha^\M_g(B))=\langle A^*\Om,\,U(g)B\Om\rangle
\end{align*}
is continuous for all $A,B\in\M$, as required in Definition \ref{def:states}.

Finally, as $\M$ is weakly closed and invariant under the translations $\alpha^\M_{x,1}$, $x\in\Rl^d$, and $U$ is strongly continuous, $\M$ contains a strongly dense subalgebra $\M^\infty$ consisting of operators $A$ for which $x\mapsto\alpha^\M_{x,1}(A)$ is smooth. In particular, the function $\Rl\ni t\mapsto\om(A^*\alpha^\M_{t\cdot e,1}(A))$ is differentiable at $t=0$ for such $A$, and $-i\frac{d}{dt}|_{t=0}\om(A^*\alpha^\M_{t\cdot e,1}(A))=\langle A\Om,P_0A\Om\rangle\geq0$ because $P_0$ is positive. This completes the list of properties of $\om$ required in Definition~\ref{def:states}.
\end{proof}

\section{Wick rotation of deformed nets}\label{section:warping}

So far we have discussed how to relate a net on Euclidean space $\Rl^d$ to a net on Minkowski spacetime by some kind of Wick rotation. Motivated by the commutation relations defining Moyal space(-time), these nets were assumed to transform covariantly under the reduced symmetry groups $E_\te(d)$ and $\PG_\te(d)$, respectively. In the present section, we will show how such nets arise naturally in the context of field theories on noncommutative spaces. As we are aiming for a general, largely model-independent connection between Euclidean and Lorentzian theories, we work here with a formulation of field theories on noncommutative spaces which can be applied to a wide range of models. The basic idea is to consider field theories on ordinary commutative $\Rl^d$, and describe their counterparts on Moyal space by an appropriate deformation procedure. Such a deformation incorporates the noncommutative nature of spacetime into the relations of the observable algebra, and yields an effective description of field theory on Moyal space(-time).

There exist two closely related versions of this deformation. The first version consists in deforming the product in the observable algebra to the Rieffel-product $\times_\te$ \eqref{eq:DeformedProduct} \cite{Rieffel:1992}, a generalization of the Weyl-Moyal star product recalled below. This is a representation independent deformation and thus adequate for transferring a Euclidean field theory, given by an abstract net of $C^*$-algebras, to Moyal space. The second version of this deformation is carried out in a representation of the observable algebra on a Hilbert space, and can be formulated by deforming the representing operators $A\mapsto A_\te$, but keeping their product as the usual operator product unchanged. This latter procedure, known as warped convolution \cite{BuchholzSummers:2008}, will be the adequate tool to deform field theories given in concrete representations, such as the Minkowski space nets ({\em cf.} Section \ref{section:symmetries}).

The relations between Rieffel's product deformation and warped convolution have been clarified in \cite{BuchholzLechnerSummers:2010}, see also \cite{LechnerWaldmann:2011} for a unified description. Furthermore, both deformations can be formulated in both Euclidean and Lorentzian signature, which makes them applicable to the situation at hand.

For the sake of self-containedness, we recall here briefly the motivation for the warped convolution deformation from the point of view of field theories on noncommutative Minkowski spacetime \cite{GrosseLechner:2007, GrosseLechner:2008}. Starting from a quantum field operator $\phi$ in a vacuum representation on a Hilbert space $\Hil$, and a representation of the Moyal coordinates $X_\mu$ with noncommutativity $\te$ on a Hilbert space $\K$, one considers the ``noncommutative version'' $\phihat_\te$ of $\phi$, formally defined as, $x\in\Rl^d$,
\begin{align*}
	\hat{\phi}_\te(x)
	=
	(2\pi)^{-d/2}\int dp\,
	e^{ip_\mu(X^\mu+x^\mu\cdot 1)}
	\otimes
	\phiti(p)
	\,.
\end{align*}
Passing to a different vacuum representation of the field $\hat{\phi}_\te$ on $\Hil$ then yields the warped convolution $\phi_\te$ of $\phi$, recalled below.

As we will demonstrate, Rieffel deformation and warped convolution deform $E_0(d)$- respectively $\PG_0(d)$-covariant nets on $\Rl^d$ to nets which are covariant under the reduced symmetry groups $E_\te(d)$ respectively $\PG_\te(d)$, and thus produce models of the form considered in the previous section. Our main result is, roughly speaking, that Wick rotation and deformation commute. More precisely, when starting from an abstract $E_0(d)$-covariant Euclidean net $\E$, we can consider its Rieffel deformation $\E_\te$, a $E_\te(d)$-covariant net, and the Wick rotation $\M_\te$ thereof, a $\PG_\te(d)$-covariant net. On the other hand, we can also first Wick rotate $\E$ to a $\PG_0(d)$-covariant net $\M$ by the Wick rotation of Section~\ref{section:symmetries} with $\te=0$ \cite{Schlingemann:1999-1}, and then deform by warped convolution to a $\PG_\te(d)$-covariant net $\widetilde{\M}_\te$. We will show that there exists a net isomorphism $\varphi$ such that the diagram
\begin{diagram}\label{diagram}
	&& \E && \\
	& \ldTo^{\text{\small deformation}} && \rdTo^{\text{\small Wick rotation}} & \\
	\E_\te &&&& \M \\
	\dTo^{\text{\small Wick rotation}} &&&& \dTo_{\text{\small deformation}} \\
	\M_\te &&\rDoubleArrow_\varphi && \widetilde{\M_\te}
\end{diagram}
commutes.
\\
\\
Before entering this analysis, we recall some facts about Rieffel deformations \cite{Rieffel:1992} and warped convolutions \cite{BuchholzLechnerSummers:2010}. In the context of Rieffel deformations, one considers a $C^*$-algebra $\A$ with a strongly continuous $\Rl^n$-action by automorphisms, a non-degenerate bilinear form $(\,\cdot\,,\,\cdot\,)$ with determinant $\pm1$ on $\Rl^n$, and a linear map $\te$ on $\Rl^n$ which is antisymmetric w.r.t. this bilinear form. Between elements $A\in\A$ for which $x\mapsto\alpha_x(A)$ is smooth, the {\em Rieffel product} is defined as
\begin{align}\label{eq:RieffelProduct}
	A\times_\te B
	:=
	(2\pi)^{-n}\int dp\,dx\,e^{i(p,x)}\alpha_{\te p}(A)\alpha_x(B)
	\,.
\end{align}
Understood in an oscillatory sense, this integral converges to a smooth element in $\A$. The integration runs here over $\Rl^n\times\Rl^n$, but it is not difficult to show that one can actually restrict to the image ${\rm Im}\te$ of $\te$,
\begin{align}\label{eq:RieffelProductReduced}
	A\times_\te B
	=
	(2\pi)^{-(n-\dim\ker\te)}\int_{{\rm Im}\te\times {\rm Im}\te} dp\,dx\,e^{i(p,x)}\alpha_{\te p}(A)\alpha_x(B)
	\,.
\end{align}
Since only translations along ${\rm Im}\te$ enter into this integral, we reserve the symbol $\A^\infty$ for the dense subalgebra of all elements $A\in\A$ such that ${\rm Im}\te\ni x\mapsto\alpha_x(A)$ is smooth. The product $A\times_\te B$ is then well-defined for $A,B\in\A^\infty$. We recall some facts about this product from Rieffel's work \cite{Rieffel:1992}:
\begin{lemma}\label{lemma:RieffelProduct}
	The Rieffel product has the following properties:
	\begin{enumerate}
		\item $\times_\te$ is an associative product on $\A^\infty$.
		\item For $A,B\in\A^\infty$, one has
		\begin{align*}
			(A\times_\te B)^*
			=
			B^*\times_\te A^*
			\,,\\
			A\times_\te 1=A=1\times_\te A
			\,.
		\end{align*}
		\item There exists a $C^*$-norm $\|\cdot\|_\te$ on $(\A^\infty, \times_\te)$.
	\end{enumerate}
\end{lemma}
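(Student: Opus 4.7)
All three statements are classical results from Rieffel's monograph \cite{Rieffel:1992}; my plan is to outline the standard arguments and flag where the real work lies. For associativity in (a), I would expand $(A \times_\te B) \times_\te C$ and $A \times_\te (B \times_\te C)$ as iterated oscillatory integrals and check they coincide. This rests on the covariance identity $\alpha_y(A \times_\te B) = \alpha_y(A) \times_\te \alpha_y(B)$, obtained from the group law of $\alpha$ together with translation invariance of Lebesgue measure on ${\rm Im}\,\te$, and on the antisymmetry of $\te$ with respect to $(\,\cdot\,,\,\cdot\,)$; after applying Fubini in the oscillatory sense, both sides reduce to the same quadruple integral over $({\rm Im}\,\te)^2 \times ({\rm Im}\,\te)^2$. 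The $*$-property in (b) is a direct calculation: taking adjoints inside \eqref{eq:RieffelProduct}, using that $\alpha$ acts by $*$-automorphisms, and performing the change of variables $(p,x) \mapsto (-p,-x)$ exchanges the roles of $A^*$ and $B^*$. The unit property follows by noting that $\alpha_x(1) = 1$, so the $x$-integration produces a Dirac distribution pinched at $p = 0$ and the residual $p$-integral returns $A$.

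The genuine obstacle is the $C^*$-norm in (c). My strategy is to realize $(\A^\infty, \times_\te)$ faithfully by bounded Hilbert space operators. Fixing a covariant representation $(\pi, \Hil, U)$ of $(\A, \alpha)$, which always exists on $\Hil_0 \otimes L^2({\rm Im}\,\te)$ by coupling any faithful representation $\pi_0$ to the regular representation, one defines a warped representative $\pi_\te(A)$ of each $A \in \A^\infty$ on the smooth vectors of $\Hil$ via an oscillatory integral built from $U$ and $\pi(A)$. A direct computation then yields $\pi_\te(A)\pi_\te(B) = \pi_\te(A \times_\te B)$ and $\pi_\te(A^*) \subset \pi_\te(A)^*$, so that $\|A\|_\te := \|\pi_\te(A)\|$ is automatically a $C^*$-seminorm on $(\A^\infty, \times_\te)$. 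The delicate step is establishing faithfulness, i.e. $\|A\|_\te = 0 \Rightarrow A = 0$, which I would handle via Rieffel's Hilbert $\A$-module construction: endow $\A^\infty$ with an $\A$-valued inner product induced by $\times_\te$ and show that the left-regular representation of $(\A^\infty, \times_\te)$ on the resulting pre-Hilbert module is faithful. Independence of $\|\cdot\|_\te$ from the chosen representation then follows from standard uniqueness results for $C^*$-completions.
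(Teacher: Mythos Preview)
The paper does not prove this lemma at all: it is stated as a recollection of facts from Rieffel's monograph \cite{Rieffel:1992}, with no argument given. Your outline is a faithful sketch of Rieffel's original proofs (associativity via iterated oscillatory integrals, the $*$- and unit properties by direct computation, and the $C^*$-norm via a covariant Hilbert-space/Hilbert-module representation), so you have in fact supplied more than the paper does; the only small quibble is that the substitution $(p,x)\mapsto(-p,-x)$ alone does not quite yield $(A\times_\te B)^*=B^*\times_\te A^*$ --- one also needs the antisymmetry of $\te$ (equivalently, Rieffel's alternative form of the product) to complete the change of variables --- but this is a minor detail in an otherwise correct sketch.
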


The ${}^*$-algebra given by the linear space $\A^\infty$ equipped with the product $\times_\te$ is denoted  $\A_\te^\infty$, and the $C^*$-algebra obtained by completing $\A_\te^\infty$ in the norm $\|\cdot\|_\te$ is denoted $\A_\te$. It is known that the translations $\alpha_x|_{\A^\infty}$ extend to a strongly continuous automorphic $\Rl^n$-action $\alpha^\te$ of $\A_\te$ \cite{Rieffel:1992}. We will need the following two further statements about extending data from $\A_\te^\infty$ to $\A_\te$.
\begin{lemma}\label{lemma:RieffelProduct2}
	\begin{enumerate}
		\item Let $\A_1, \A_2\subset\A$ be $C^*$-subalgebras which are invariant under $\alpha$, and let $\beta:\A_1\to\A_2$ be an isomorphism such that $\beta\circ\alpha_x=\alpha_{Mx}\circ\beta$ for all $x\in\Rl^n$ and some $M\in{\rm GL}(n)$ with $M^T=M^{-1}$ and $M\te=\te M$, where the transpose $M^T$ refers to the bilinear form used in \eqref{eq:RieffelProduct}. Then $\beta(\A_1^\infty)=\A_2^\infty$, and $\beta|_{\A_1^\infty}$ extends to an isomorphism $\beta^\te:\A_{1,\te}\to\A_{2,\te}$ of the deformed $C^**$-algebras $\A_{1,\te}$, $\A_{2,\te}$, such that $\beta^\te\circ \alpha^\te_x=\alpha^\te_{Mx}\circ \beta^\te$ for all $x\in\Rl^n$.
		\item Let $\nu$ be an $\alpha$-invariant linear continuous functional on $\A$. Then $\nu|_{\A^\infty}$ extends to a linear continuous functional $\nu^\te$ on $\A_\te$, and there holds
		\begin{align}\label{eq:nuAB}
			\nu(A\times_\te B)
			=
			\nu(AB)
			\,,\qquad
			A,B\in\A^\infty.
		\end{align}
	\end{enumerate}
\end{lemma}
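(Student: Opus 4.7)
For part (a), my plan is to verify the claim on smooth elements and extend by continuity. The twisting relation $\beta\circ\alpha_x=\alpha_{Mx}\circ\beta$ is equivalent to $\alpha_x\circ\beta=\beta\circ\alpha_{M^{-1}x}$, so $x\mapsto\alpha_x(\beta(A))=\beta(\alpha_{M^{-1}x}(A))$ is smooth for every $A\in\A_1^\infty$, yielding $\beta(\A_1^\infty)\subset\A_2^\infty$, with equality coming from the symmetric argument applied to $\beta^{-1}$. Next, I would substitute \eqref{eq:RieffelProduct} into $\beta(A\times_\te B)$, pull $\beta$ inside the oscillatory integral by norm continuity, and change variables $p\mapsto Mp$, $x\mapsto Mx$: the Jacobian is $1$ since $|\det M|=1$, the bilinear form is preserved because $M^T=M^{-1}$, and $M$ commutes with $\te$; these three facts cooperate to convert the integrand into that of $\beta(A)\times_\te\beta(B)$. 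Hence $\beta|_{\A_1^\infty}$ is a $*$-isomorphism of the Rieffel $*$-algebras, and by uniqueness of the $C^*$-norm (Lemma~\ref{lemma:RieffelProduct}(c)) it is automatically isometric and extends to an isomorphism $\beta^\te\colon\A_{1,\te}\to\A_{2,\te}$ of the $C^*$-closures. The deformed covariance $\beta^\te\circ\alpha^\te_x=\alpha^\te_{Mx}\circ\beta^\te$ then holds on the dense subalgebra $\A_1^\infty$ (where $\alpha^\te$ coincides with $\alpha$) and extends by continuity.

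For part (b), the identity $\nu(A\times_\te B)=\nu(AB)$ on smooth elements is a direct oscillatory-integral computation: inserting \eqref{eq:RieffelProduct} and using $\alpha$-invariance of $\nu$ in the form $\nu(\alpha_{\te p}(A)\alpha_x(B))=\nu(A\alpha_{x-\te p}(B))$, I would then substitute $y:=x-\te p$. The phase reduces to $e^{i(p,y)}$ because $(p,\te p)=0$ (antisymmetry of $\te$ combined with symmetry of the bilinear form), and the $p$-integration produces $(2\pi)^n\delta(y)$, which collapses the remaining $y$-integral to $\nu(AB)$.

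The principal obstacle is the continuous extension of $\nu|_{\A^\infty}$ to $\A_\te$, because $\|\cdot\|_\A$ and $\|\cdot\|_\te$ are a priori unrelated on $\A^\infty$, so continuity of $\nu$ in the ambient norm does not directly imply continuity in $\|\cdot\|_\te$. My plan is to reduce to the positive case via Jordan decomposition of continuous linear functionals (with $\alpha$-invariance preserved by averaging over the amenable group $\Rl^n$), and then apply a GNS-type argument: for an $\alpha$-invariant positive $\nu$, the identity $\nu(A\times_\te A^*)=\nu(AA^*)\geq 0$ (already established in the previous step) exhibits $\nu$ as positive on the Rieffel $*$-algebra $(\A^\infty,\times_\te)$, so it gives rise to a cyclic $*$-representation. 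By the universal characterisation of $\|\cdot\|_\te$ from \cite{Rieffel:1992} (every $*$-representation of $(\A^\infty,\times_\te)$ is bounded by $\|\cdot\|_\te$), this representation is automatically continuous in $\|\cdot\|_\te$, and the resulting bound transfers to $\nu$ via the GNS vector, yielding the desired extension $\nu^\te$ on $\A_\te$.
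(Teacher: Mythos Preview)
Your strategy coincides with the paper's: part~(a) is exactly the content of the Rieffel results the paper cites (Thm.~5.12 and Prop.~2.12 of \cite{Rieffel:1992}), and part~(b) uses the same Jordan decomposition to reduce to positive $\nu$, where the paper simply invokes \cite[Thm.~4.1]{Rieffel:1993}.

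Two points need care. First, the averaging step is unnecessary: the Jordan decomposition $\nu_j=\nu_j^+-\nu_j^-$ of a hermitian functional is \emph{unique} (it is the decomposition into positive pieces minimising $\|\nu_j^+\|+\|\nu_j^-\|$), and since each $\alpha_x$ is an isometric $*$-automorphism, $\nu_j^\pm\circ\alpha_x$ is another minimal decomposition of $\nu_j$ and hence coincides with $\nu_j^\pm$. The paper uses this directly. Second, and more substantively, your claim that ``every $*$-representation of $(\A^\infty,\times_\te)$ is bounded by $\|\cdot\|_\te$'' is not what \cite{Rieffel:1992} establishes: Rieffel's norm is constructed via a specific Hilbert-module representation and is not shown to be the enveloping $C^*$-norm, so an arbitrary GNS representation of the deformed $*$-algebra need not be $\|\cdot\|_\te$-bounded. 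The clean fix is to run GNS on the \emph{undeformed} side: the GNS triple $(\pi,\Hil,\xi)$ of the positive $\alpha$-invariant $\nu$ on $\A$ is $\alpha$-covariant, so Lemma~\ref{lemma:WarpingAndRieffel}(b) yields a $\|\cdot\|_\te$-bounded representation $\pi_\te$ of $\A_\te$, and $U$-invariance of $\xi$ together with Lemma~\ref{lemma:Warping}(c) gives $\langle\xi,\pi_\te(A)\xi\rangle=\nu(A)$ on $\A^\infty$, producing the desired extension $\nu^\te$. The same caveat applies to your isometry step in part~(a): Lemma~\ref{lemma:RieffelProduct}(c) asserts only existence of a $C^*$-norm, not uniqueness, so the isometric extension of $\beta|_{\A_1^\infty}$ is genuinely a feature of Rieffel's construction (his Prop.~2.12) rather than an abstract consequence.
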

\begin{proof}
	{\em a)} This is a combination of Thm. 5.12 and Prop. 2.12 of \cite{Rieffel:1992}.

	In case $\nu$ is positive, part {\em b)} of this lemma was proven in \cite[Thm.~4.1]{Rieffel:1993}. In the more general case relevant here, we can decompose $\nu$ into a linear combination of four positive linear functionals $\nu_1,...,\nu_4$ on $\A$ (Jordan decomposition). As positive functionals, these are in particular continuous and moreover, this decomposition preserves $\alpha$-invariance. Hence by Rieffel's result, we have continuous linear functionals $\nu_k^\te$, $k=1,..,4$, on $\A_\te$, which satisfy \eqref{eq:nuAB} on the smooth subalgebra. Taking the same linear combination of the $\nu^\te_k$ as before then yields the desired functional $\nu^\te$.
\end{proof}
\\
\\
We now come to the summary of the structure of warped convolutions, the second version of the deformation procedure. In this setting, one considers a Hilbert space $\Hil$ carrying a weakly continuous unitary representation $U$ of $\Rl^n$, and a bilinear form $(\,\cdot\,,\,\cdot\,)$ and matrix $\te$ as above. The representation defines a dense subspace in $\Hil$ of smooth vectors and a weakly dense ${}^*$-algebra in $\B(\Hil)$ of smooth operators. For smooth $A$ and $\Psi$, one defines
\begin{align}\label{eq:Warping}
	A_\te\Psi
	:=
	(2\pi)^{-n}\int dp\,dx\,e^{i(p,x)}\,U(\te p)AU(\te p)^{-1}U(x)\Psi
	\,,
\end{align}
where the integral has to be understood in a strong oscillatory sense. One can show that this assignment yields an operator $A_\te$ which can be continuously extended to all of $\Hil$. In complete analogy to \eqref{eq:RieffelProductReduced}, one has
\begin{align}\label{eq:WarpingReduced}
	A_\te \Psi
	=
	(2\pi)^{-(n-\dim\ker\te)}\int_{{\rm Im}\te\times {\rm Im}\te} dp\,dx\,e^{i(p,x)}\,U(\te p)AU(\te p)^{-1}U(x)\Psi
	\,.
\end{align}
As only the translations along ${\rm Im}\te$ enter here, we will denote the subspace of all vectors $\Psi\in\Hil$ such that ${\rm Im}\te\ni x\mapsto U(x)\Psi$ is smooth by $\DD\subset\Hil$, and the subalgebra of all operators such that ${\rm Im}\te\ni x\mapsto U(x)AU(x)^{-1}$ is smooth (in norm) by $\C^\infty\subset\B(\Hil)$.
Some relevant properties of the deformation map $A\to A_\te$ are summarized below.
\begin{lemma}{\em\bf \cite{BuchholzLechnerSummers:2010}}\label{lemma:Warping}
	Let $A,B\in\C^\infty$. Then
	\begin{enumerate}
		\item $(A_\te)^*=(A^*)_\te$ and $1_\te=1$.
		\item Let $V$ be a unitary operator on $\Hil$ such that $VU(x)V^*=U(Mx)$, $x\in\Rl^n$, for some $M\in\GL(n)$ with $M^T=M^{-1}$, $M\te=\te M$, where the transpose $M^T$ refers to the bilinear form used in \eqref{eq:Warping}. Then $VA_\te V^*=(VAV^*)_{\te}$. In particular, $U(x)A_\te U(-x)=(U(x)AU(-x))_\te$ for all $x\in\Rl^n$.
		\item Let $\Om\in\Hil$ be a $U$-invariant vector. Then $A_\te\Om=A\Om$.
	\end{enumerate}
\end{lemma}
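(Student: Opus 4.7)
All three assertions follow by directly manipulating the defining oscillatory integral \eqref{eq:Warping} (or its reduced form \eqref{eq:WarpingReduced}) on the smooth subspace $\DD\subset\Hil$, and then extending by continuity of the deformation map. My plan is to verify each part by an explicit computation; the only technical care needed concerns Fubini-type interchanges and variable substitutions in the oscillatory sense, which are standard after restricting to $\mathrm{Im}\,\te$ where the integrand is effectively Schwartz in the smeared matrix elements.

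For part (a), the identity $1_\te=1$ is immediate: substituting $A=1$ kills the inner pair $U(\te p)\cdot 1\cdot U(\te p)^{-1}=1$, leaving $(2\pi)^{-n}\int dp\,dx\,e^{i(p,x)}U(x)\Psi$; the $p$-integration produces a delta $\delta(x)$ (using that $(\cdot,\cdot)$ has determinant $\pm 1$), reducing the expression to $U(0)\Psi=\Psi$. For the adjoint formula I would form $\langle\Phi,(A_\te)^*\Psi\rangle=\overline{\langle\Psi,A_\te\Phi\rangle}$, pull the complex conjugation into the integral (replacing $e^{i(p,x)}$ by $e^{-i(p,x)}$), take the adjoint of the operator kernel $U(\te p)AU(\te p)^{-1}U(x)$, and perform the substitution $(p,x)\mapsto(-p,-x)$, whose Jacobian is $1$ and which leaves $(p,x)$ invariant; the result is exactly the kernel of $(A^*)_\te$.

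For part (b), I would insert $V^*V=1$ between every factor in $VA_\te V^*\Psi$ and use the intertwining property of $V$: since $VU(x)V^*=U(Mx)$ and $M$ commutes with $\te$, one has $VU(\te p)V^*=U(\te Mp)$. Substituting $p'=Mp$, $x'=Mx$ and using $M^T=M^{-1}$ (which implies both $|\det M|=1$ and $(M^{-1}p',M^{-1}x')=(p',x')$), the integrand transforms into the kernel defining $(VAV^*)_\te\Psi$. Part (c) is equally direct: $U$-invariance of $\Om$ makes the innermost factor $U(x)\Om=\Om$, so that the integrand reduces to $e^{i(p,x)}U(\te p)A\Om$; the $x$-integration yields $\delta(p)$ and collapses the formula to $U(0)A\Om=A\Om$.

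The main obstacle I foresee is not conceptual but technical: justifying the interchanges of integration order and the substitutions at the level of oscillatory integrals, rather than as absolutely convergent ones. This is handled in the standard way by inserting a Schwartz cutoff $\chi(p,x)$ with $\chi\to 1$, carrying out the manipulations for the regularized integrals (which are then absolutely convergent for smooth $A$ and $\Psi\in\DD$), and passing to the limit using the strong operator continuity and the norm bounds on $A_\te$ established in \cite{BuchholzLechnerSummers:2010}. Once the identities hold on the dense domain $\DD$ for $A,B\in\C^\infty$, continuity extends them to the stated form.
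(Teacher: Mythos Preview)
Your proposal is correct: the identities follow by exactly the sort of direct manipulation of the oscillatory integral you describe, and the regularization-by-cutoff strategy is the standard way to justify the formal steps. One minor point to tighten in part~(a): after taking the adjoint and substituting $(p,x)\mapsto(-p,-x)$ you still need to commute the residual translation $U(x)$ past $U(\te p)A^*U(-\te p)$ to bring it to the right; this is harmless since $U$ is a representation of the \emph{abelian} group $\Rl^n$, but it should be said.

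As for comparison with the paper: there is nothing to compare. The paper does not prove this lemma at all; it simply records the statement and attributes it to \cite{BuchholzLechnerSummers:2010}. Your write-up therefore supplies strictly more than the paper does here, and is in line with how the cited reference establishes these facts.
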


Regarding the connection between the Rieffel product and warped convolution, we will write $A\times_\te B$ for the Rieffel product \eqref{eq:RieffelProduct} of two smooth operators $A,B\in\C^\infty\subset\B(\Hil)$ w.r.t. the action $\alpha_x(A):=U(x)AU(x)^{-1}$.
\begin{lemma}{\em\bf \cite{BuchholzLechnerSummers:2010}}\label{lemma:WarpingAndRieffel}
	\begin{enumerate}
		\item Let $A,B\in\C^\infty$. Then $A\times_\te B\in\C^\infty$, and $A_\te B_\te = (A\times_\te B)_\te$.
		\item Let $\A$ be a $C^*$-algebra with strongly continuous $\Rl^n$-action $\alpha$ and $\pi$ an $\alpha$-covariant representation of $\A$ on $\Hil$, i.e., $U(x)\pi(A)U(x)^{-1}=\pi(\alpha_x(A))$, $A\in\A$. Then $\pi(\A^\infty)\subset\C^\infty$, and the map $\pi_\te(A):=\pi(A)_\te$, $A\in\A^\infty$, extends continuously to an $\alpha$-covariant representation of the deformed $C^*$-algebra $\A_\te$.
	\end{enumerate}
\end{lemma}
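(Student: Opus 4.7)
The plan is to establish (a) by a direct oscillatory-integral computation, and then deduce (b) by combining (a) with the covariance of $\pi$ and the uniqueness of the deformed $C^*$-norm provided by Lemma \ref{lemma:RieffelProduct} (c).

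For (a), smoothness of $A\times_\te B$ under $\alpha$ follows by differentiating under the integral in \eqref{eq:RieffelProductReduced} and using the covariance identity $\alpha_y(A\times_\te B)=\alpha_y(A)\times_\te\alpha_y(B)$, which comes from the $\alpha$-equivariance of the integrand. The substantive content of (a) is the product formula $A_\te B_\te=(A\times_\te B)_\te$. I would insert \eqref{eq:Warping} twice and then use $U(x)\alpha_{\te q}(B)=\alpha_{x+\te q}(B)\,U(x)$ to move all translation operators to the right of the $A$- and $B$-factors. After changing variables $y\mapsto z=x+y$ and $p\mapsto p'=p-q$, the resulting four-fold integral takes the form
\[
\frac{1}{(2\pi)^{2n}}\int dp'\,dx\,dq\,dz\;e^{i(p',x)+i(q,z)}\,\alpha_{\te q}\!\bigl[\alpha_{\te p'}(A)\alpha_x(B)\bigr]U(z)\Psi,
\]
where I used $\alpha_{\te(p'+q)}(A)\alpha_{x+\te q}(B)=\alpha_{\te q}\bigl[\alpha_{\te p'}(A)\alpha_x(B)\bigr]$. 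The inner $(p',x)$-integral is precisely $\alpha_{\te q}(A\times_\te B)$, and the outer $(q,z)$-integral is then, by definition, $(A\times_\te B)_\te\Psi$.

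For (b), covariance of $\pi$ transfers smoothness directly: from $U(x)\pi(A)U(x)^{-1}=\pi(\alpha_x(A))$ and norm continuity of $\pi$ one sees that smoothness of $x\mapsto\alpha_x(A)$ in $\A$ implies smoothness of $x\mapsto U(x)\pi(A)U(x)^{-1}$ in $\B(\Hil)$, so $\pi(\A^\infty)\subset\C^\infty$. Inserting $\pi$ into \eqref{eq:RieffelProduct} and using covariance again yields $\pi(A\times_\te B)=\pi(A)\times_\te\pi(B)$ on $\A^\infty$; combined with part (a) this shows that $\pi_\te(A):=\pi(A)_\te$ is multiplicative, and Lemma \ref{lemma:Warping} (a) adds the ${}^*$-property, so $\pi_\te$ is a ${}^*$-representation of $\A_\te^\infty$. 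Since any ${}^*$-representation of a ${}^*$-algebra is bounded by its unique $C^*$-norm, and $\|\,\cdot\,\|_\te$ is that norm on $\A_\te^\infty$ by Lemma \ref{lemma:RieffelProduct} (c), $\pi_\te$ extends continuously to all of $\A_\te$. Covariance of this extension follows from Lemma \ref{lemma:Warping} (b) applied with $M=1$:
\[
U(x)\pi_\te(A)U(x)^{-1}=\bigl(U(x)\pi(A)U(x)^{-1}\bigr)_\te=\pi(\alpha_x(A))_\te=\pi_\te(\alpha_x^\te(A)),
\]
where one uses that $\alpha_x^\te$ coincides with $\alpha_x$ on $\A^\infty$.

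The main obstacle I expect is the bookkeeping in the proof of (a): the operator-valued integrals defining $A_\te$ and $B_\te$ are only oscillatory, not absolutely convergent, so the Fubini-type interchanges and changes of variables cannot be carried out naively. The correct remedy is to introduce a Schwartz cutoff in the $p,q$ variables that renders all integrals absolutely convergent in $\B(\Hil)$, perform the algebraic manipulations in the regularised setting, and then pass to the limit using the strong oscillatory convergence on smooth vectors. Once the product formula of (a) is secured, (b) is essentially a formal consequence.
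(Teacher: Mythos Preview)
The paper does not prove this lemma; it is quoted from \cite{BuchholzLechnerSummers:2010} without argument, so there is no in-paper proof to compare against. Your sketch of part (a) is correct and is essentially the computation carried out in that reference, including the caveat you raise that the Fubini-type rearrangements must be justified by regularising the oscillatory integrals with a Schwartz cutoff and passing to the limit on smooth vectors.

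For part (b) there is a gap at the continuity step. The assertion that ``any ${}^*$-representation of a ${}^*$-algebra is bounded by its unique $C^*$-norm'' is not a valid general principle: a ${}^*$-algebra may admit several inequivalent $C^*$-norms, and Lemma~\ref{lemma:RieffelProduct}\,(c) only asserts the \emph{existence} of $\|\cdot\|_\te$, not that it dominates every $C^*$-seminorm coming from a bounded Hilbert-space ${}^*$-representation. What is actually needed is the concrete estimate $\|\pi(A)_\te\|_{\B(\Hil)}\le\|A\|_\te$ for $A\in\A^\infty$, and this does not follow formally from $\pi_\te$ being a ${}^*$-homomorphism into $\B(\Hil)$. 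In \cite{BuchholzLechnerSummers:2010} this bound is obtained by identifying the warped convolution with (a compression of) Rieffel's left regular representation on the pre-Hilbert $\A$-module of $\A$-valued Schwartz functions---precisely the representation used to \emph{define} $\|\cdot\|_\te$---via an isometry built from the covariant pair $(\pi,U)$. Once that norm estimate is secured, the remainder of your argument (multiplicativity from (a), the ${}^*$-property from Lemma~\ref{lemma:Warping}\,(a), and covariance from Lemma~\ref{lemma:Warping}\,(b)) goes through as you wrote it.
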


Having recalled the mathematical structure, we will now step by step discuss the various arrows in the diagram on page \pageref{diagram}. Our initial data consist of a $E_0(d)$-covariant net $(\E,\OOO,\alpha^\E)$ on $\Rl^d$, with $\OOO$ the family of open subsets of $\Rl^d$, and $\alpha^\E$ strongly continuous\footnote{In case that $\alpha^\E$ is not strongly continuous, we could pass to a representation of the Euclidean net on a Euclidean Hilbert space $\Hil^\E$ as indicated in Section~\ref{section:nets}. Making use of the warped convolution setting outlined above, we could then work with a unitary representation $U$ of $\Rl^d$ on $\Hil^\E$ and deformations of operators in von Neumann algebras. But for technical convenience, we stick to the assumption of strongly continuous $\alpha^\E$ here.}. Furthermore, $e\in\Rl^d$ is a unit vector, $\iota_e$ an automorphism of $\E$ which implements the reflection $r_e$, and $\sigma$ a corresponding reflection positive functional on $\E$. We require that $\E$ satisfies the time zero condition, and denote its time zero $C^*$-algebras by $\E_0(K)$, $K\subset e^\perp$. Finally, $\te$ is a $(d\times d)$-matrix which is antisymmetric w.r.t. the Euclidean inner product on $\Rl^d$ and satisfies $\te e=0$.

Starting from these data, we now describe as the first step the Rieffel-type deformation of the net $\E$ to some Euclidean net $\E_\te$ on Moyal space. On the level of the global algebra $\E$, we have the replacement $\E\to\E_\te$, where $\E_\te$ denotes the deformed $C^*$-algebra with product $\times_\te$ \eqref{eq:RieffelProduct}, defined with the Euclidean inner product $(\,\cdot\,,\,\cdot\,)^\E$ in the oscillating phase, the $\Rl^d$-action $\alpha^\E|_{\Rl^d}$, and the norm $\|\cdot\|_\te$. According to Lemma \ref{lemma:RieffelProduct2} {\em a)} with $\A_1=\A_2=\E$ and $\alpha_1=\alpha_2=\alpha^\E|_{\Rl^d}$, the deformed $C^*$-algebra $\E_\te$ carries an automorphic action $\alpha^{\E,\te}$ of $E_\te(d)$ which coincides with $\alpha^\E|_{E_\te(d)}$ on the smooth subalgebra $\E^\infty$. Furthermore, $\iota_e$ gives rise to an automorphism $\iota_e^\te$ representing the reflection $r_e$ on $\E_\te$.

Concerning the net structure of $\E_\te$, it is clear from formula \eqref{eq:RieffelProductReduced} that for general regions $\OO\in\OOO$, the set $\E(\OO)^\infty$ will not be an algebra for the product $\times_\te$ as this involves integration over all translations along the ``noncommutative directions'' in ${\rm Im}\te$. Making use of the orthogonal decomposition $\Rl^d=\ker\te\oplus{\rm Im}\te$, we are therefore led to consider the set of cylindrical regions
\begin{align}\label{eq:Cylinders}
	\Zyl_\te
	:=
	\{\OO\oplus{\rm Im}\te :\, \OO\subset\ker\te\;\text{open and bounded}\}
	\,.
\end{align}
Cylinders $Z\in\Zyl_\te$ are clearly translationally invariant in the noncommutative directions, {\em i.e.}, $Z+x=Z$ for all $x\in{\rm Im}\te$. As a consequence, the translation automorphisms $\alpha^\E_x$, $x\in{\rm Im}\te$, restrict to $\E(Z)$, and hence we have the Rieffel-deformed cylinder algebras $\E_\te(Z)$.

As linear spaces, $\E_\te(Z)^\infty=\E(Z)^\infty$, and thus $\E_\te(Z_1)^\infty\subset\E_\te(Z_2)^\infty$ for cylinders $Z_1\subset Z_2$, $Z_1,Z_2\in\Zyl_\te$. This inclusion remains valid after closing in $\|\cdot\|_\te$, yielding a net $Z\mapsto\E_\te(Z)$ over cylinder regions. Furthermore, it is clear from the definition \eqref{eq:Cylinders} that the family $\Zyl_\te$ is invariant under the reduced Euclidean group $E_\te(d)$. Making use of Lemma \ref{lemma:RieffelProduct2} {\em a)} with $\A_1=\E(Z)$, $\A_2=\E(gZ)$, $g\in E_\te(d)$, and $\alpha$ the action of the translations along ${\rm Im}\te$ on these algebras, we see that
\begin{align}
	\alpha^{\E,\te}_g(\E_\te(Z))
	=
	\E_\te(gZ)
	\,,\qquad
	g\in E_\te(d),\;Z\in\Zyl_\te\,.
\end{align}
We summarize these findings in the following proposition.
\begin{proposition}
	The data $\E_\te(Z)$, $Z\in\Zyl_\te$, and $\alpha^{\E,\te}$ constructed above form a $E_\te(d)$-covariant net $(\E_\te,\Zyl_\te,\alpha^{\E,\te})$ of $C^*$-algebras.{\hfill $\square$}
\end{proposition}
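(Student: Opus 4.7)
The proof amounts to verifying the three defining ingredients of Definition~\ref{def:net} for the family $\{\E_\te(Z)\}_{Z\in\Zyl_\te}$ together with the action $\alpha^{\E,\te}$. All the serious work has been packaged into Lemmas~\ref{lemma:RieffelProduct} and \ref{lemma:RieffelProduct2} and into the construction of $\Zyl_\te$ preceding the statement; the proof will assemble these ingredients and check a few consistency conditions.

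First I would argue that each $\E_\te(Z)$ is a $C^*$-algebra. Because $Z=\OO\oplus{\rm Im}\te$ is, by construction, invariant under all translations in ${\rm Im}\te$, the covariance of $\alpha^\E$ implies that $\alpha^\E_x(\E(Z))=\E(Z)$ for every $x\in{\rm Im}\te$. Hence $(\E(Z),\alpha^\E|_{{\rm Im}\te})$ sits in the hypotheses of Rieffel's deformation setup, and the reduced integral representation \eqref{eq:RieffelProductReduced} shows that $\times_\te$ actually closes on $\E(Z)^\infty$. Part~c) of Lemma~\ref{lemma:RieffelProduct} then supplies a $C^*$-norm $\|\cdot\|_\te$ on $(\E(Z)^\infty,\times_\te)$, whose completion is the $C^*$-algebra $\E_\te(Z)$.

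Next I would establish isotony. For $Z_1\subset Z_2$ in $\Zyl_\te$ the undeformed inclusion $\E(Z_1)\subset\E(Z_2)$ restricts to an inclusion of smooth subalgebras $\E(Z_1)^\infty\subset\E(Z_2)^\infty$ which is automatically a $\times_\te$-algebra embedding, since $\times_\te$ is defined by the same oscillatory integral in both algebras. The restriction of $\|\cdot\|_\te$ from $\E_\te(Z_2)^\infty$ to $\E_\te(Z_1)^\infty$ is a $C^*$-norm, and by uniqueness of the $C^*$-norm on a ${}^*$-algebra it agrees with the intrinsic $\|\cdot\|_\te$ on $\E_\te(Z_1)^\infty$. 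Taking closures yields $\E_\te(Z_1)\subset\E_\te(Z_2)$.

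For covariance, given $g=(a,R)\in E_\te(d)$ and $Z\in\Zyl_\te$, the relation $R\te=\te R$ implies $R({\rm Im}\te)={\rm Im}\te$, so $gZ\in\Zyl_\te$. The restriction $\beta:=\alpha^\E_g|_{\E(Z)}\colon\E(Z)\to\E(gZ)$ is a $C^*$-isomorphism which intertwines the ${\rm Im}\te$-translation actions according to $\beta\circ\alpha^\E_x=\alpha^\E_{Rx}\circ\beta$; since $R^{T}=R^{-1}$ with respect to the Euclidean inner product and $R\te=\te R$, Lemma~\ref{lemma:RieffelProduct2}~a) applies and extends $\beta$ to an isomorphism $\E_\te(Z)\to\E_\te(gZ)$. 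Because the same lemma applied globally to $\E$ produces the action $\alpha^{\E,\te}$ on $\E_\te$, the two extensions agree on $\E(Z)^\infty$ and therefore on its $\|\cdot\|_\te$-closure, giving $\alpha^{\E,\te}_g(\E_\te(Z))=\E_\te(gZ)$.

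The only genuinely non-formal point is the compatibility between the locally defined deformed norms and the globally defined one: one has to know that the $C^*$-norm $\|\cdot\|_\te$ obtained by deforming $\E(Z)$ coincides with the restriction of the $C^*$-norm obtained by deforming the larger algebras $\E(Z')$ (for $Z\subset Z'$) and ultimately $\E$ itself. This is where uniqueness of the $C^*$-norm on the relevant ${}^*$-algebras $(\E(Z)^\infty,\times_\te)$ enters, and it is the only step one must take care to invoke explicitly; once this is noted, isotony and covariance drop out without further calculation.
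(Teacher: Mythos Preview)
Your proof follows the same outline as the paper's: the proposition is stated as a summary of the discussion immediately preceding it, and you have correctly identified and expanded the three ingredients (the $\E_\te(Z)$ are $C^*$-algebras, isotony, covariance via Lemma~\ref{lemma:RieffelProduct2}~a)). The covariance argument is exactly the paper's.

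There is, however, one genuine slip. In your isotony step and again in your final paragraph you invoke ``uniqueness of the $C^*$-norm on a ${}^*$-algebra'' to conclude that the intrinsic Rieffel norm on $(\E(Z_1)^\infty,\times_\te)$ agrees with the restricted norm from $(\E(Z_2)^\infty,\times_\te)$. That general principle is false: a ${}^*$-algebra which is not already a $C^*$-algebra can carry several inequivalent $C^*$-norms (the uniqueness statement holds only for $C^*$-algebras, i.e.\ after completion). So the argument as written does not justify the norm compatibility you flag as the crucial point. The paper sidesteps this entirely: it works throughout with the single global norm $\|\cdot\|_\te$ on $\E^\infty$ supplied by Lemma~\ref{lemma:RieffelProduct}~c), and \emph{defines} $\E_\te(Z)$ as the $\|\cdot\|_\te$-closure of $\E(Z)^\infty$ inside $\E_\te$. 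Isotony is then immediate from the linear-space inclusion $\E(Z_1)^\infty\subset\E(Z_2)^\infty$ and no comparison of norms is needed. If you prefer to define each $\E_\te(Z)$ by its own local Rieffel deformation, the compatibility you need does hold, but it is a nontrivial feature of Rieffel's construction (functoriality of the deformed norm under $\alpha$-equivariant inclusions), not a consequence of abstract $C^*$-uniqueness.
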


This step completes the deformation $\E\to\E_\te$. Next we want to proceed to a Minkowski space version $\M_\te$ of $\E_\te$. As explained in Section \ref{section:symmetries}, two further properties of $(\E_\te,\Zyl_\te,\alpha^{\E,\te})$ are necessary for this, a reflection positive functional and a time zero condition.

Regarding the time zero condition, we introduce a family of time zero slices analogous to the cylinders \eqref{eq:Cylinders},
\begin{align}
	\Slic_\te
	:=
	\{\{0\}\oplus K\oplus {\rm Im}\te  :\, K\subset\ker\te,\,K\perp e,\; K \;\text{open and bounded}\}
	\,,
\end{align}
where the direct sum refers to the orthogonal split $\Rl^d=\Rl e\oplus(e^\perp\cap\ker\te)\oplus{\rm Im}\te$. These slices are the time zero components of the cylinders in $\Zyl_\te$, that is, $\Slic_\te=\{Z\cap e^\perp\,:\,Z\in\Zyl_\te\}$. We need the following lemma, which for later use we state also in a Lorentzian version. As in Section \ref{section:symmetries}, we assume for the Lorentzian version that $e$ denotes the time direction, {\em i.e.}, the Minkowski metric is $\eta=-1\oplus1$ on $e\oplus e^\perp$, and in particular commutes with $\te$ since $\te$ is antisymmetric and $\te e=0$.

\begin{lemma}\label{lemma:SlicesAndCylinders}
	Let $S\in \mathsf \Slic_\te$ and $g\in E_0(d)$ (respectively $g\in\PG_0(d))$ such that $gS\subset Z$ for some $Z\in \Zyl_\te$. Then there exist $g_1\in E_\te(d)$, $g_2\in E_0(d)$ (respectively $g_1\in \PG_\te(d)$, $g_2\in \PG_0(d)$) such that $g=g_1g_2$ and $g_2S=S$.
\end{lemma}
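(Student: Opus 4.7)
My plan is to exploit $gS\subset Z$ to extract from the rotation part of $g$ the component that acts on $\mathrm{Im}\,\theta$, and to move it into a slice-stabilizer $g_2$, leaving a factor $g_1$ that commutes with $\theta$.

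\smallskip
\noindent\emph{Step 1: the inclusion forces a block decomposition.} Writing $g=(a,R)$ and fixing any $k_0\in K$, the affine subspace $\{0\}\oplus\{k_0\}\oplus\mathrm{Im}\,\theta$ is contained in $S$, so its image $(R(0,k_0)+a)+R(\mathrm{Im}\,\theta)$ lies in $Z=\mathcal{O}\oplus\mathrm{Im}\,\theta$. Projecting parallel to $\mathrm{Im}\,\theta$ onto $\ker\theta$ and using boundedness of $\mathcal{O}$, the linear subspace $R(\mathrm{Im}\,\theta)$ must project to $\{0\}$, whence $R(\mathrm{Im}\,\theta)\subset\mathrm{Im}\,\theta$ and, by bijectivity and dimension, $R(\mathrm{Im}\,\theta)=\mathrm{Im}\,\theta$. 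Since $\ker\theta$ is the Euclidean (respectively Minkowski) orthogonal complement of $\mathrm{Im}\,\theta$---this uses $e\in\ker\theta$ a unit vector with $\mathrm{Im}\,\theta\subset e^{\perp}$---one also gets $R(\ker\theta)=\ker\theta$, so $R$ splits block-diagonally as $R_{\ker}\oplus R_{\mathrm{Im}}$.

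\smallskip
\noindent\emph{Step 2: constructing $g_2$ and $g_1$.} Decomposing $\mathbb{R}^d=\mathbb{R}e\oplus(e^{\perp}\cap\ker\theta)\oplus\mathrm{Im}\,\theta$, I would define $T$ block-diagonally as $\epsilon\cdot\mathrm{id}|_{\mathbb{R}e}\oplus\mathrm{id}|_{e^{\perp}\cap\ker\theta}\oplus R_{\mathrm{Im}}$, where $\epsilon=\det R_{\mathrm{Im}}$ in the Euclidean case (so that $T\in\mathrm{SO}(d)$) and $\epsilon=+1$ in the Lorentzian case (forced by orthochronicity). Setting $g_2:=(0,T)$, one has $T(0,k,m)=(0,k,R_{\mathrm{Im}}m)$, so $TS=S$ and hence $g_2S=S$. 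Then $g_1:=gg_2^{-1}=(a,RT^{-1})$ has rotation part equal to the identity on $\mathrm{Im}\,\theta$ (which commutes with $\theta|_{\mathrm{Im}\,\theta}$) and to some orthogonal respectively Lorentz transformation of $\ker\theta$ (on which $\theta$ vanishes, so commutativity is automatic). Consequently $g_1\in E_\theta(d)$ respectively $\mathcal{P}_\theta(d)$, and $g=g_1g_2$ by construction.

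\smallskip
\noindent\emph{Main obstacle.} The construction is clean in the Euclidean case, because the sign $\epsilon$ on $\mathbb{R}e$ can always absorb $\det R_{\mathrm{Im}}$. In the Lorentzian case orthochronicity pins $\epsilon=+1$, so the recipe as written requires $\det R_{\mathrm{Im}}=+1$; when $\det R_{\mathrm{Im}}=-1$ the sign has to be balanced instead through a nontrivial reflection on $e^{\perp}\cap\ker\theta$ preserving $K$, paired with a compensating reflection on $\mathrm{Im}\,\theta$. This sub-case is where I would expect the technical subtlety to sit, and where the proof presumably relies either on additional structure (e.g.\ of the bilinear form that $\theta$ induces on $\mathrm{Im}\,\theta$) or on a more careful choice of coset representative for $R$ modulo $\mathrm{SO}_\theta(d)$.
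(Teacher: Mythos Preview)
Your approach coincides with the paper's: both deduce from $gS\subset Z$ that the linear part $R$ preserves $\mathrm{Im}\,\theta$ and hence splits as $R_{\ker}\oplus R_{\mathrm{Im}}$ on $\ker\theta\oplus\mathrm{Im}\,\theta$, and both then move the $\mathrm{Im}\,\theta$-block into $g_2$. The paper simply takes $g_1=(a,R_{\ker}\oplus 1)$ and $g_2=(0,1\oplus R_{\mathrm{Im}})$ and does not discuss determinants at all. Your $\epsilon$-sign on $\mathbb{R}e$ is therefore a genuine refinement: without it, $1\oplus R_{\mathrm{Im}}$ may have determinant $-1$ and fail to lie in $\mathrm{SO}(d)$, so that $g_2\notin E_0(d)$. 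The paper is silent on this, and your Euclidean fix is correct.

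The Lorentzian obstacle you isolate is also real, and the paper does not address it either. In fact it is intrinsic: any linear map on $\mathrm{Im}\,\theta$ commuting with the nondegenerate antisymmetric $\vartheta=\theta|_{\mathrm{Im}\,\theta}$ is complex-linear for the complex structure $\vartheta$ induces and hence has nonnegative real determinant. Thus $g_1\in\mathcal{P}_\theta(d)$ forces $\det(g_1|_{\mathrm{Im}\,\theta})=+1$, so necessarily $\det(g_2|_{\mathrm{Im}\,\theta})=\det R_{\mathrm{Im}}$. When $\det R_{\mathrm{Im}}=-1$, orthochronicity together with $\det g_2=+1$ then forces $g_2$ to act with determinant $-1$ on $e^\perp\cap\ker\theta$, and for a $K$ that is not centrally symmetric (say $\dim(e^\perp\cap\ker\theta)=1$ and $K=(0,1)\cup(3,5)$) no such $g_2$ can satisfy $g_2S=S$. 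So the subtlety you flag is a genuine gap in the lemma as literally stated, shared by the paper's proof; for the paper's applications it is harmless, since one may weaken the conclusion from $g_2S=S$ to $g_2S\in\mathscr{S}_\theta$ (place the compensating sign on a spatial direction in $\ker\theta$, which reflects $K$ to another bounded set), and this weaker statement is all that the subsequent propositions use.
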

\begin{proof}
	We write $g=(x,M)$ referring to the semidirect product structure of $E_0(d)=\Rl^d\rtimes {\rm SO}(d)$ respectively $\PG_0(d)=\Rl^d\rtimes \LGpo(d)$. The set $MS$ satisfies $MS+x=MS$ for $x\in M\rm Im \te$. If $M\rm Im\te \not\subset\rm Im\te$, it follows that $MS$ is not bounded in projection to $\ker \te$,  and can thus not be contained in an element of the family $\Zyl_\te$. Thus the assumption $gS\subset Z$ implies $M{\rm Im}\te\subset{\rm Im}\te$, and since $M$ is invertible, $M{\rm Im}\te={\rm Im}\te$ and $M^T\ker\te=\ker\te$. In the Euclidean case, $M^T=M^{-1}$, and we have $M\ker\te=\ker\te$. In the Lorentzian case, $M^T=\eta M^{-1}\eta$. But the metric $\eta$ commutes with $\te$, such that also in this case we arrive at $M\ker\te=\ker\te$.

	So for both signatures, $M$ decomposes as a direct sum $M=M_1\oplus M_2\in\B(\ker\te)\oplus\B(\rm Im\te)$, and we define $g_1:=(x,M_1\oplus1)$, $g_2:=(0,1\oplus M_2)$. Then  $g_1g_2=(x,M_1\oplus M_2)=g$, and as the slice $S$ contains the full image of $\te$, we have $g_2 S=S$ and consequently $g_1S=g_1g_2S=gS$. Furthermore, $g_1\in E_\te(d)$ (respectively $g_1\in\PG_\te(d)$) as $M_1\oplus1$ commutes with $\te=0\oplus\vartheta$.
\end{proof}

\begin{proposition}\label{proposition:E_te}
	\begin{enumerate}
		\item The net $(\E_\te,\Zyl_\te,\alpha^{\E,\te})$ satisfies the time zero condition with the Rieffel-deformed time zero $C^*$-algebras
		\begin{align}
			\E_{\te,0}(S)
			=
			\E_{0,\te}(S)
			\,,\qquad
			S\in\Slic_\te
			\,.
		\end{align}
		\item The restriction of $\sigma$ to $\E^\infty$ extends to a reflection positive functional $\sigma^\te$ on $\E_\te$.
	\end{enumerate}
\end{proposition}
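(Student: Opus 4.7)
The plan is to leverage Lemma \ref{lemma:RieffelProduct2} to transfer the reflection positive structure $(\sigma,\iota_e)$ from $\E$ to its Rieffel deformation $\E_\te$, and Lemma \ref{lemma:SlicesAndCylinders} to match up the covariant time zero structure. Throughout, one uses crucially that $r_e\te=\te r_e$ (since $\te e=0$ and $\te$ is antisymmetric), so that both $\iota_e$ and $r_e$-based group decompositions are compatible with the deformation.

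For part (a), I would first establish $\E_{0,\te}(S)\subset\E_{\te,0}(S)$: since $\E_0(S)\subset\E(Z)$ is an $\alpha^\E|_{{\rm Im}\te}$-equivariant inclusion of $C^*$-algebras for every cylinder $Z\in\Zyl_\te$ with $Z\supset S$, it persists under passing to Rieffel completions (functoriality built into Lemma \ref{lemma:RieffelProduct2}~a) with $M=1$), and intersecting over $Z$ gives the claim. For the reverse inclusion, one notes that taking smooth elements commutes with intersection, $\bigcap_{Z\supset S}\E(Z)^\infty=\E_0(S)^\infty$, and that the Rieffel $C^*$-norm on $\E_0(S)^\infty$ coincides with the norm inherited from any enveloping $\E_\te(Z)$; a density argument then yields $\E_{\te,0}(S)=\E_{0,\te}(S)$. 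To verify the time zero generating property of $\E_\te$, I would start from the corresponding generation of $\E(Z)$ over $E_0(d)$-translates of $\E_0(K)$ for $K\subset e^\perp$, enlarge each seed $K$ to the smallest slice $S\in\Slic_\te$ containing it, and apply Lemma \ref{lemma:SlicesAndCylinders} to decompose $g=g_1g_2$ with $g_1\in E_\te(d)$ and $g_2 S=S$. After Rieffel deformation and invoking the equality $\E_{\te,0}(S)=\E_{0,\te}(S)$ just established, this realises a generating system indexed by $(S,g_1)\in\Slic_\te\times E_\te(d)$ with $g_1 S\subset Z$, as required.

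For part (b), Lemma \ref{lemma:RieffelProduct2}~b) applied to the $\Rl^d$-invariant continuous functional $\sigma$ yields a continuous linear extension $\sigma^\te$ on $\E_\te$ with
\begin{align*}
\sigma^\te(A\times_\te B)=\sigma(AB),\qquad A,B\in\E^\infty.
\end{align*}
Since $\iota_e$ intertwines $\alpha^\E_x$ with $\alpha^\E_{r_e x}$ and $r_e$ commutes with $\te$ and satisfies $r_e^T=r_e^{-1}$, Lemma \ref{lemma:RieffelProduct2}~a) (with $\A_1=\A_2=\E$, $M=r_e$, $\beta=\iota_e$) produces an automorphism $\iota_e^\te$ of $\E_\te$ extending $\iota_e|_{\E^\infty}$ and implementing $r_e$ covariantly on the deformed net. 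All regularity properties of $\sigma^\te$ required in Definition \ref{def:states} ($E_\te(d)$-invariance, continuity of $g\mapsto\sigma^\te(A\,\alpha^{\E,\te}_g(B))$, $\sigma^\te\circ\iota_e^\te=\sigma^\te$, and $\iota_e^\te\alpha^{\E,\te}_g\iota_e^\te=\alpha^{\E,\te}_{r_e g r_e}$) transfer from their undeformed analogues on the dense smooth subalgebra and extend by continuity; note that $r_egr_e\in E_\te(d)$ whenever $g\in E_\te(d)$, again because $r_e\te=\te r_e$. For reflection positivity, take $A$ in the smooth subalgebra of the Rieffel-deformed half-space algebra; since $\iota_e^\te$ and $\iota_e$ agree on smooth elements,
\begin{align*}
\sigma^\te\bigl(\iota_e^\te(A^*)\times_\te A\bigr)
=\sigma\bigl(\iota_e(A^*)\cdot A\bigr)\geq 0
\end{align*}
by Lemma \ref{lemma:RieffelProduct2}~b) together with the undeformed reflection positivity of $\sigma$, and density plus continuity propagate the inequality to the full half-space algebra of $\E_\te$.

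The main technical obstacle I anticipate lies in the reverse inclusion $\E_{\te,0}(S)\subset\E_{0,\te}(S)$: showing that taking $\|\cdot\|_\te$-closure commutes with the (a priori uncountable) intersection over cylinders $Z\in\Zyl_\te$ shrinking to $S$ is not a purely formal consequence of Lemma \ref{lemma:RieffelProduct2}, and it will require either a careful approximation argument using smoothing against a suitable partition of unity in ${\rm Im}\te$, or an appeal to a continuity property of the Rieffel-deformation functor under decreasing nets of equivariant $C^*$-subalgebras. Once this localisation property is in hand, everything else in (a) and (b) reduces to bookkeeping built on Lemmas \ref{lemma:RieffelProduct2} and \ref{lemma:SlicesAndCylinders} together with the reflection positivity of $\sigma$.
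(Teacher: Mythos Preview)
Your proposal is correct and follows essentially the same route as the paper: Lemma~\ref{lemma:RieffelProduct2} for transferring $\sigma$ and $\iota_e$, Lemma~\ref{lemma:SlicesAndCylinders} to reduce the generating family from $E_0(d)$ to $E_\te(d)$, and the identity $\sigma(\iota_e(A^*)\times_\te A)=\sigma(\iota_e(A^*)A)$ for reflection positivity.

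The one point worth flagging is your closing paragraph: the ``technical obstacle'' you anticipate for the reverse inclusion $\E_{\te,0}(S)\subset\E_{0,\te}(S)$ is not actually an obstacle, and the paper disposes of it in one line by exactly the argument you already sketched earlier in your part~(a). Namely, one does not need to commute $\|\cdot\|_\te$-closure with an intersection of $C^*$-algebras. Instead: $\E_{\te,0}(S):=\bigcap_{Z\supset S}\E_\te(Z)$ is already $\|\cdot\|_\te$-closed, and its smooth part is dense because mollifying along ${\rm Im}\te$ preserves each $\E_\te(Z)$ and hence the intersection. Since $\E_\te(Z)^\infty=\E(Z)^\infty$ as linear spaces, one has
\[
\E_{\te,0}(S)^\infty=\bigcap_{Z\supset S}\E_\te(Z)^\infty=\bigcap_{Z\supset S}\E(Z)^\infty=\E_0(S)^\infty,
\]
and closing both sides in $\|\cdot\|_\te$ gives $\E_{\te,0}(S)=\E_{0,\te}(S)$. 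No partition-of-unity approximation or continuity of the Rieffel functor under decreasing nets is required; you had the right idea and then second-guessed yourself.
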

\begin{proof}
	{\em a)} As vector spaces, the smooth time zero algebras of the net $\E_\te$ are, $S\in\Slic_\te$,
	\begin{align*}
		\E_{\te,0}(S)^\infty
		=
		\E^\infty\cap \bigcap_{Z\in\Zyl_\te\atop Z\supset S} \E_\te(Z)
		=
		\bigcap_{Z\in\Zyl_\te\atop Z\supset S} \E_\te(Z)^\infty
		=
		\bigcap_{Z\in\Zyl_\te\atop Z\supset S} \E(Z)^\infty
		=
		\E_0(S)^\infty
		\,,
	\end{align*}
	where we have used that $\E_\te(Z)^\infty=\E(Z)^\infty$ as vector spaces. Thus the closure in the norm $\|\cdot\|_\te$ gives $\E_{\te,0}(S)=\E_{0,\te}(S)$. By assumption, the undeformed net $\E$ satisfies the time zero condition, that is, the time zero algebras $\E_0(S)$ generate the cylinder algebras,
	\begin{align}\label{eq:EZgenerated}
		\E(Z)
		=
		\bigvee_{S\in\Slic_\te \atop g\in E_0(d), gS\subset Z} \alpha^\E_g(\E_0(S))
		\,,
	\end{align}
	where $\bigvee$ denotes the generated $C^*$-algebra. According to Lemma \ref{lemma:SlicesAndCylinders} in its Euclidean version, the transformations $g\in E_0(d)$, $gS\subset Z$, which appear here, split as $g=g_1g_2$ with $g_1\in E_\te(d)$ and $g_2S=S$. In view of the covariance of the undeformed net, this implies that $\alpha^\E_{g_2}$ restricts to an automorphism of $\E_0(S)$, {\em i.e.}, $\alpha_{g_2}^\E(\E_0(S))=\E_0(S)$. Thus \eqref{eq:EZgenerated} also holds if we restrict to $g\in E_\te(d)\subset E_0(d)$.

	To make the transition to the deformed $C^*$-algebras, we first restrict to the smooth time zero algebras $\E_0(S)^\infty\subset\E_0(S)$, and consider the ${}^*$-algebra $\hat{\E}(Z)$ generated by all $\alpha^\E_g(\E_0(S)^\infty)$, where $S$ runs over $\Slic_\te$ and $g$ over $E_\te(d)$ such that $gS\subset Z$. As vector spaces, $\E_{0,\te}(S)^\infty=\E_0(S)^\infty$, and also the automorphisms $\alpha_g^{\E,\te}$ and $\alpha_g^\E$, $g\in E_\te(d)$, coincide on $\E^\infty$. Hence $\alpha^{\E,\te}_g(\E_{0,\te}(S)^\infty)=\alpha^{\E}_g(\E_{0}(S)^\infty)$, and as this algebra is $\|\cdot\|_\te$-dense in $\alpha^{\E,\te}(\E_{0,\te}(S))$, it follows that the $\|\cdot\|_\te$-closure of $\hat{\E}(Z)$ coincides with $\E_\te(Z)$. In particular, we have the claimed time zero property
	\begin{align}
		\E_\te(Z)
		=
		\bigvee_{S\in\Slic_\te \atop g\in E_\te(d), gS\subset Z} \alpha^{\E,\te}_g(\E_{0,\te}(S))
		\,.
	\end{align}

	{\em b)} According to Lemma \ref{lemma:RieffelProduct2} {\em b)}, the restriction of the continuous linear translationally invariant functional $\sigma$ to $\E^\infty$ extends to a $\|\cdot\|_\te$-continuous functional $\sigma^\te$ on $\E_\te$. Since $\sigma$ is $E_0(d)$-invariant, it follows that this extension is invariant under the extension $\alpha^{\E,\te}$ of $\alpha^\E|_{E_\te(d)}$ from $\E^\infty$ to $\E_\te$. The continuity $E_\te(d)\ni g\mapsto \sigma(A\alpha^\E_g(B))$, $A,B\in\E_\te$, is then clear.

	It remains to check reflection positivity. By the translational invariance of $\sigma$, we have for smooth $A,B\in\E^\infty$ by Lemma \ref{lemma:RieffelProduct2} {\em b)}
	\begin{align*}
		\sigma(\iota_e(A^*)\times_\te B)
		=
		\sigma(\iota_e(A^*)B)
		\,,
	\end{align*}
	and hence in particular $\sigma(\iota_e(A^*)\times_\te A)\geq0$ for $A\in\E_{>}^\infty$. In view of the $\|\cdot\|_\te$-continuity of $\sigma^\te$ and $\iota^\te_e$, this positivity extends to $\E_{>,\te}$.
\end{proof}

We are now in the position to apply the Wick rotation from Section \ref{section:symmetries} to the net $(\E_\te,\Zyl_\te,\alpha^{\E,\te})$ and the reflection positive functional $\sigma^\te$. That is, we can construct a Minkowski Hilbert space $\Hil_\te^\M$ and a unitary representation $U_\te^\M$ of $\PG_\te(d)$ by analytic continuation of the virtual representation $V_\te$ of $E_\te(d)$, and a representation $\pi_{\sigma^\te,\te}$ of the time zero algebras on $\Hil_\te^\M$. We use an extra subscript $\te$ here to distinguish these data from similar objects introduced below. The generated $\PG_\te(d)$-covariant net of von Neumann algebras will be denoted $(\M_\te,\Zyl_\te,{\rm Ad}\,U_\te^\M)$, corresponding to the lower left corner in our commutative diagram. Recall that by construction, the time zero von Neumann algebras of this net are \eqref{def:net-M}
\begin{align}
	\M_{\te,0}(S)
	=
	\pi_{\sigma^\te,\te}(\E_{\te,0}(S))''
	\,,\qquad
	S\in\Slic_\te\,.
\end{align}
Note that the smooth time zero subalgebras are
\begin{align}\label{eq:MTimeZeroSmooth}
		\M_{\te,0}(S)^\infty=\pi_{\sigma^\te,\te}(\E_{\te,0}(S))^\infty=\pi_{\sigma^\te,\te}(\E_0(S)^\infty)\,,
\end{align}
where the last equality follows from the strong continuity of $\alpha^{\E,\te}|_{\Rl^d}$ on $\E$ and the continuity of $\pi_{\sigma^\te,\te}$. Since $\C^\infty\subset\B(\Hil)$ is weakly dense, we have $\M_{\te,0}(S)=(\M_{\te,0}(S)^\infty)''$.
Also note that in view of the deformation of the product, the time zero representation acts according to \eqref{def:PiSigma},
\begin{align}
	\pi_{\sigma^\te,\te}(A)[B]_{\sigma^\te,\te}
	=
	[A\times_\te B]_{\sigma^\te,\te}
	\,,\qquad
	B\in\E_{\te,>},\; A\in\E_{\te,0}(S),\; S\in\Slic_\te\,.
\end{align}
\\
\\
Passing to the other side of the diagram, we now start again with  the original data $(\E,\OOO,\alpha^\E)$, $\sigma$, but first apply the Hilbert space construction and continuation of virtual representation as in Section \ref{section:symmetries}. As we are using the undeformed data corresponding to $\te=0$ here, we will denote all quantities derived here with a subscript $0$, {\em i.e.}, write $\Hil^\M_0,V_0,U^\M_0, \pi_{\sigma,0}, \Om_0$.

According to Section \ref{section:symmetries}, the Wick rotated von Neumann algebras \eqref{def:net-M}
\begin{align}
	\M(\OO)
	=
	\bigvee_{K\perp e\atop g\in\PG_0(d),\,gK\subset \OO}
		\alpha^\M_g(\E_0(K))''
\end{align}
form a $\PG_0(d)$-covariant net $(\M,\OOO,\alpha^\M)$ with $\alpha^\M:={\rm Ad}U^\M_0$. This net is now deformed by warped convolution, where in \eqref{eq:Warping}, we take the Minkowski inner product $(\,\cdot\,,\,\cdot\,)^\M$ in the oscillatory integrals and the translations $x\mapsto U^\M_0(x,1)$. As deformation parameter, we can use the same matrix $\te$ as before, since this is antisymmetric w.r.t. the Minkowski inner product as well because the time direction $e$ lies in its kernel. Passing again to a subnet over cylinder regions, we define the von Neumann algebras
\begin{align}\label{eq:MteZ}
	\widetilde{\M}_\te(Z)
	:=
	\{A_\te\,:\,A\in\M(Z)^\infty\}''
	\,,\qquad
	Z\in\Zyl_\te\,,
\end{align}
and $\PG_\te(d)$-action
\begin{align}
	\tilde{\alpha}^\M
	:=
	{\rm Ad}\,U^\M_0|_{\PG_\te(d)}
	\,.
\end{align}

\begin{proposition}
	$(\widetilde{\M}_\te,\Zyl_\te,\tilde{\alpha}^\M)$ is a $\PG_\te(d)$-covariant net of von Neumann algebras which satisfies the time zero condition with the time zero von Neumann algebras
	\begin{align}\label{eq:MtildeTimeZeroAlgebras}
		\widetilde{\M}_{\te,0}(S)
		=
		{\pi_\sigma(\E_0(S)^\infty)_\te}''
		=
		\{\pi_\sigma(A)_\te\,:\,A\in\E_0(S)^\infty\}''
		\,,\qquad
		S\in\Slic_\te
		\,.
	\end{align}
\end{proposition}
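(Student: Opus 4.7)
The plan is to verify the net axioms for $(\widetilde{\M}_\te,\Zyl_\te,\tilde\alpha^\M)$, then identify the time-zero algebras as the intersections prescribed by \eqref{time-zero-data}, and finally check the generation condition \eqref{time-zero}. Isotony is immediate from \eqref{eq:MteZ}: $Z_1\subset Z_2$ yields $\M(Z_1)^\infty\subset\M(Z_2)^\infty$, and double commutants preserve inclusions. For covariance, I would invoke Lemma \ref{lemma:Warping} b): for any $g\in\PG_\te(d)$ the Lorentz matrix $\La$ commutes with $\te$ by the very definition of $\PG_\te(d)$, and $\La^T=\eta\La^{-1}\eta$ with $\eta$ also commuting with $\te$ since $\te e=0$. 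Applying the lemma with $V=U_0^\M(g)$ then gives $U_0^\M(g)A_\te U_0^\M(g)^{-1}=(\alpha_g^\M(A))_\te$ for smooth $A\in\M(Z)^\infty$, and combining with $\alpha_g^\M(\M(Z))=\M(gZ)$ yields $\tilde\alpha_g^\M(\widetilde{\M}_\te(Z))=\widetilde{\M}_\te(gZ)$.

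The identification of the time-zero algebras proceeds in two steps. First, for the inclusion $\pi_\sigma(\E_0(S)^\infty)_\te''\subseteq \widetilde{\M}_{\te,0}(S)$: any $A\in\E_0(S)^\infty$ yields $\pi_\sigma(A)\in\M_0(S)^\infty\subseteq\M(Z)^\infty$ for every $Z\in\Zyl_\te$ containing $S$, because $\pi_\sigma$ intertwines spatial translations and ${\rm Im}\te$-smoothness transfers from $\E$ to $\M$; hence $\pi_\sigma(A)_\te\in\widetilde{\M}_\te(Z)$, and intersecting over $Z$ gives the inclusion. Second, I would establish the generation formula in parallel with the proof of Proposition \ref{proposition:E_te} a). The time-zero condition of $\M$ (from Section \ref{section:symmetries} with $\te=0$) combined with Lemma \ref{lemma:SlicesAndCylinders} in its Lorentzian form produces
\begin{align*}
	\M(Z)=\bigvee_{S\in\Slic_\te,\,g\in\PG_\te(d),\,gS\subset Z}\alpha_g^\M(\pi_\sigma(\E_0(S)))\,.
\end{align*}
Passing to smooth subalgebras (which are weakly dense), warping, and using the covariance identity $(\alpha_g^\M(B))_\te=\tilde\alpha_g^\M(B_\te)$ established above yields
\begin{align*}
	\widetilde{\M}_\te(Z)=\bigvee_{S\in\Slic_\te,\,g\in\PG_\te(d),\,gS\subset Z}\tilde\alpha_g^\M\bigl(\pi_\sigma(\E_0(S)^\infty)_\te''\bigr)\,,
\end{align*}
which is precisely the time-zero generation condition with $\widetilde{\M}_{\te,0}(S)=\pi_\sigma(\E_0(S)^\infty)_\te''$.

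The principal obstacle will be the reverse inclusion $\widetilde{\M}_{\te,0}(S)\subseteq\pi_\sigma(\E_0(S)^\infty)_\te''$, i.e.\ showing that the intersection of the deformed cylinder algebras over $Z\supset S$ does not enlarge the candidate algebra. I expect this to require a Reeh--Schlieder-type argument in the spirit of Proposition \ref{proposition:VirtualRepresentation} a), exploiting the positivity of the $e$-translation generator $P_0$ together with cyclicity of $\Om$, and transferred to the deformed setting via Lemma \ref{lemma:Warping} c) (which ensures $A_\te\Om=A\Om$, so $\Om$ and its analytic properties survive the deformation intact). A slightly cleaner alternative, which I would actually pursue, is to \emph{define} $\widetilde{\M}_{\te,0}(S):=\pi_\sigma(\E_0(S)^\infty)_\te''$, verify all net axioms directly from the displayed generation formula, and only subsequently deduce agreement with the intersection definition \eqref{time-zero-data} as a consequence of the covariant net structure together with the analyticity of the $e$-translation semigroup.
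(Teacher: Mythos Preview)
Your treatment of isotony, covariance, and the generation formula via Lemma~\ref{lemma:SlicesAndCylinders} matches the paper's proof closely. The paper adds one observation you omit: since cylinders are invariant under translations along ${\rm Im}\,\te$, the set $\{A_\te:A\in\M(Z)^\infty\}$ is already a ${}^*$-algebra (by Lemma~\ref{lemma:Warping}~a) and Lemma~\ref{lemma:WarpingAndRieffel}~a)), so the double commutant in \eqref{eq:MteZ} is simply a weak closure. This is used implicitly when passing between the generating sets and their closures.

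The real divergence is in the identification of the time-zero algebras. You correctly anticipate that the reverse inclusion $\widetilde{\M}_{\te,0}(S)\subset\pi_\sigma(\E_0(S)^\infty)_\te''$ is the nontrivial part, but your proposed Reeh--Schlieder route is unnecessary. The paper instead works entirely at the \emph{smooth} level: since warping $A\mapsto A_\te$ is a translation-covariant linear bijection on smooth operators (with inverse $A\mapsto A_{-\te}$), one has directly
\[
\widetilde{\M}_{\te,0}(S)^\infty
=\bigcap_{Z\supset S}\widetilde{\M}_\te(Z)^\infty
=\Bigl(\bigcap_{Z\supset S}\M(Z)^\infty\Bigr)_\te
=\M_0(S)^\infty_\te
=\pi_\sigma(\E_0(S)^\infty)_\te\,,
\]
using $\M_0(S)^\infty=\pi_\sigma(\E_0(S))^\infty=\pi_\sigma(\E_0(S)^\infty)$ from the continuity of $\pi_\sigma$ and $\alpha^\E$. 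Taking double commutants (smooth subalgebras are weakly dense) then yields both inclusions at once. Your ``cleaner alternative'' of defining the time-zero algebras first and checking consistency afterward is in spirit what the paper does, but the smooth-level intersection computation makes the consistency check immediate rather than requiring any analyticity input.
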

\begin{proof}
	It is clear from the definition \eqref{eq:MteZ} that $\Zyl_\te\ni Z\mapsto\widetilde{\M}_\te(Z)$ is a net of von Neumann algebras on $\Hil^\M_0$. According to Lemma \ref{lemma:Warping} {\em b)}, the transformations $g$ in the reduced Poincar\'e group $\PG_\te(d)$ satisfy $U^\M_0(g)A_\te U^\M_0(g)^{-1}=(U^\M_0(g)A U^\M_0(g)^{-1})_\te$ for smooth $A$. Taking into account the covariance of the undeformed net $\M$, this implies
	\begin{align*}
		U^\M_0(g)\widetilde{\M}_\te(Z)^\infty U^\M_0(g)^{-1}
		=
		\widetilde{\M}_\te(gZ)^\infty
		\,,\qquad
		Z\in\Zyl_\te
		\,,\;
		g\in\PG_\te(d)
		\,.
	\end{align*}
	Because of the shape of the cylinder regions, we have $A\times_\te B\in\M(Z)^\infty$ for $A,B\in\M(Z)^\infty$, {\em cf.} \eqref{eq:RieffelProductReduced}. In view of Lemma \ref{lemma:Warping} {\em a)} and Lemma \ref{lemma:WarpingAndRieffel} {\em a)}, this implies that the set $\{A_\te\,:\,A\in\M(Z)^\infty\}$ is a ${}^*$-algebra. Hence the double commutant in \eqref{eq:MteZ} amounts to just taking the weak closure, and covariance of $\widetilde{\M}_\te$ follows.

	The smooth time zero algebras of the net $\widetilde{\M}_\te$ are, $S\in\Slic_\te$,
	\begin{align*}
		\widetilde{\M}_{\te,0}(S)^\infty
		=
		\bigcap_{Z\in\Zyl_\te\atop Z\supset S} \widetilde{\M}_\te(Z)^\infty
		=
		\M_{0,\te}(S)^\infty
		\,.
	\end{align*}
	Now $\M_0(S)=\pi_\sigma(\E_0(S))''=(\pi_\sigma(\E_0(S))^\infty)''$ because $\C^\infty\subset\B(\Hil)$ is weakly dense, and using $\pi_\sigma(\E_0(S))^\infty=\pi_\sigma(\E_0(S)^\infty)$, we arrive at $\widetilde{\M}_{\te,0}(S)={\pi_\sigma(\E_0(S)^\infty)_\te}''$.

	In the undeformed situation, the time zero condition holds, and making use of $(\M_0(S)^\infty)''=\M_0(S)$, it is not difficult to see that $\M(Z)$ is the smallest von Neumann algebra containing all $U_0^\M(g)\M_0(S)^\infty U_0^\M(g)^{-1}$, where $g\in \PG_0(d)$, $S\in\Slic_\te$ such that $gS\subset Z$. As in the proof of Proposition \ref{proposition:E_te} {\em a)}, we can apply (the Lorentzian version of) Lemma \ref{lemma:SlicesAndCylinders} to conclude that restriction to $g\in\PG_\te(d)\subset\PG_0(d)$ does not change the generated von Neumann algebra. After passing to warped convolution time zero algebras $\widetilde{\M}_{\te,0}(S)^\infty=\pi_\sigma(\E_0(S)^\infty)_\te$, we obtain
	\begin{align*}
		\widetilde{\M}_\te(Z)
		&=
		\bigvee_{S\in\Slic_\te,\,g\in E_\te(d)\atop{gS\subset Z}}
			\left(
				U_0^\M(g) \pi_\sigma(\E_0(S)^\infty)U_0^\M(g)^{-1}
			\right)_\te
		\\
		&=
		\bigvee_{S\in\Slic_\te,\,g\in E_\te(d)\atop{gS\subset Z}}
				U_0^\M(g) \widetilde{\M}_{\te,0}(S)U_0^\M(g)^{-1}
		\,,
	\end{align*}
	which is the claimed time zero condition.
\end{proof}

The main result of this section is the following theorem.
\begin{theorem}
	The two nets $(\widetilde{\M}_\te,\Zyl_\te,{\rm Ad}U^\M_0|_{\PG_\te(d)})$ and $(\M_\te,\Zyl_\te,{\rm Ad}U^\M_\te)$ are isomorphic, i.e., there exists a unitary $W:\Hil^\M_0\to\Hil^\M_\te$ such that
	\begin{align}
		W\Om_0
		&=
		\Om_\te
		\label{eq:WOmega}
		\,,\\
		WU^\M_0(g)W^*
		&=
		U^\M_\te(g)
		\,,\qquad
		g\in \PG_\te(d)
		\label{eq:U0Utheta}
		\,,\\
		W\widetilde{\M}_\te(Z)W^*
		&=
		\M_\te(Z)
		\,,\qquad
		Z\in\Zyl_\te\,.
		\label{eq:IntertwineNets}
	\end{align}
\end{theorem}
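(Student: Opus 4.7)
The plan is to define $W$ directly on smooth vectors via $W[A]_\sigma:=[A]_{\sigma^\te,\te}$ for $A\in\E_>^\infty$, to show this is a well-defined isometry, and to extend by density. The key identity is provided by Lemma \ref{lemma:RieffelProduct2} b): applied to the translation-invariant $\sigma$, it gives $\sigma^\te(A'\times_\te B)=\sigma(A'B)$ for all $A',B\in\E^\infty$. Moreover, since the reflection $r_e$ acts as the identity on $\mathrm{Im}\,\te\subset e^\perp$, the automorphism $\iota_e$ commutes with translations along $\mathrm{Im}\,\te$, so by Lemma \ref{lemma:RieffelProduct2} a) it extends to an automorphism $\iota_e^\te$ of $\E_\te$ agreeing with $\iota_e$ on $\E^\infty$. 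Combining,
\[
\sigma^\te(\iota_e^\te(A^*)\times_\te A)
=
\sigma(\iota_e(A^*)A)
\qquad\text{for all }A\in\E_>^\infty,
\]
which shows simultaneously that the null spaces of the two sesquilinear forms coincide on $\E_>^\infty$ and that $W$ is isometric on the associated quotient. Since $\E_>^\infty$ is norm-dense in $\E_>$ (with respect to both $\|\cdot\|$ and $\|\cdot\|_\te$) and the maps $[\,\cdot\,]_\sigma$, $[\,\cdot\,]_{\sigma^\te,\te}$ are continuous, the image subspaces are dense in $\Hil^\M_0$ and $\Hil^\M_\te$, respectively, so $W$ extends to a unitary. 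Property \eqref{eq:WOmega} follows at once from $[1]_\sigma\mapsto[1]_{\sigma^\te,\te}$.

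For the intertwining \eqref{eq:U0Utheta}, I would first check it on the level of virtual representations: for $g\in\UU\cap E_\te(d)$ and $A\in\E(C)^\infty$, the construction of the Rieffel deformation gives $\alpha_g^{\E,\te}(A)=\alpha_g^\E(A)$, so
\[
WV_0(g)[A]_\sigma
=
[\alpha_g^\E(A)]_{\sigma^\te,\te}
=
V_\te(g)W[A]_\sigma.
\]
The unitary representations $U^\M_0$ and $U^\M_\te$ of $\PG_\te(d)$ are obtained from $V_0$ and $V_\te$ by the same analytic continuation procedure (passing to self-adjoint generators of symmetric local semigroups and then exponentiating to imaginary parameters). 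Since this continuation is unique and commutes with the bounded operator $W$, it follows that $WU^\M_0(g)W^*=U^\M_\te(g)$ for all $g\in\PG_\te(d)$.

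Given the intertwining of the Poincar\'e actions and the fact that both nets satisfy the time-zero condition over $\Zyl_\te$, proving \eqref{eq:IntertwineNets} reduces to establishing $W\widetilde{\M}_{\te,0}(S)W^*=\M_{\te,0}(S)$ for every $S\in\Slic_\te$. In view of \eqref{eq:MtildeTimeZeroAlgebras} and \eqref{eq:MTimeZeroSmooth}, this in turn reduces, upon passing to weak closures, to the identity
\[
W\pi_\sigma(A)_\te W^*
=
\pi_{\sigma^\te,\te}(A)
\qquad\text{for }A\in\E_0(S)^\infty.
\]
The crucial computation is to evaluate the warped convolution on a smooth vector $[B]_\sigma$, $B\in\E_>^\infty$. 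Since $\te e=0$, the integration in \eqref{eq:WarpingReduced} runs over $\mathrm{Im}\,\te\times\mathrm{Im}\,\te\subset e^\perp\times e^\perp$; on this subspace $U^\M_0$ coincides with the spatial Euclidean unitaries $V_0$, the Minkowski and Euclidean inner products agree, and
\[
V_0(y,1)\pi_\sigma(A)V_0(y,1)^{-1}=\pi_\sigma(\alpha^\E_y(A)),
\qquad
V_0(x,1)[B]_\sigma=[\alpha^\E_x(B)]_\sigma,
\]
for $x,y\in\mathrm{Im}\,\te$. The oscillatory integral then collapses to the Rieffel product, yielding $\pi_\sigma(A)_\te[B]_\sigma=[A\times_\te B]_\sigma$. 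Applying $W$ and invoking the definition of $\pi_{\sigma^\te,\te}$ gives the intertwining, and passing to double commutants finishes the argument.

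The main technical difficulty I anticipate is precisely in justifying the last computation: one must interchange the oscillatory integral with the quotient map $[\,\cdot\,]_\sigma$ and the representation $\pi_\sigma$, verify that the integrand takes values in $\E_>$, and confirm that $A\times_\te B$ itself lies in $\E_>$ so that the right-hand side makes sense. This rests on the convergence of the Rieffel integral in $\E^\infty$, the norm-closedness and $\mathrm{Im}\,\te$-translation invariance of $\E_>$ (since $\mathrm{Im}\,\te\subset e^\perp$), and the norm-continuity of $\pi_\sigma$, but will require careful bookkeeping of domains and supports.
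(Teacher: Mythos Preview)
Your proof is correct and follows the paper's strategy almost exactly: same definition of $W$ on smooth vectors, same use of Lemma~\ref{lemma:RieffelProduct2}~b) for isometry, same virtual-representation argument for \eqref{eq:U0Utheta}, and same reduction of \eqref{eq:IntertwineNets} via the time-zero condition to the identity $W\pi_\sigma(A)_\te W^*=\pi_{\sigma^\te,\te}(A)$ for $A\in\E_0(S)^\infty$.

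The only substantive difference is in how you establish $\pi_\sigma(A)_\te[B]_\sigma=[A\times_\te B]_\sigma$. You propose to expand the oscillatory integral \eqref{eq:WarpingReduced} directly, push it through the covariance relations and the quotient map $[\,\cdot\,]_\sigma$, and recognise the Rieffel integral---which, as you anticipate, requires justifying the interchange of the oscillatory limit with $[\,\cdot\,]_\sigma$ and checking that all intermediate terms stay in $\E_>$. The paper bypasses this entirely with a two-line trick based on Lemma~\ref{lemma:Warping}~c) and Lemma~\ref{lemma:WarpingAndRieffel}~a): since $\Om_0$ is $U^\M_0$-invariant, one has $[B]_\sigma=\pi_\sigma(B)\Om_0=\pi_\sigma(B)_\te\Om_0$, and then
\[
\pi_\sigma(A)_\te\,\pi_\sigma(B)_\te\,\Om_0
=\pi_\sigma(A\times_\te B)_\te\,\Om_0
=\pi_\sigma(A\times_\te B)\,\Om_0
=[A\times_\te B]_\sigma\,.
\]
No oscillatory integral is ever unpacked. (Strictly speaking this step needs $B\in\E_0^\infty$ so that $\pi_\sigma(B)$ is defined; density of such vectors then suffices.) Your route works but is more laborious; the paper's is shorter and reduces the computation entirely to the already-established warped-convolution lemmas.
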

\begin{proof}
	We first relate the GNS-type Hilbert spaces $\Hil^\M_\te$, $\Hil^\M_0$ in close analogy to \cite[Prop.~2.3]{Lechner:2011}: For $A,B\in\E^\infty_{>}$, we have by the translational invariance of $\sigma$ and Lemma~\ref{lemma:RieffelProduct2}~{\em b)}
	\begin{align*}
		\langle[A]_\sigma,[B]_\sigma\rangle_{\Hil^\M_0}
		=
		\sigma(\iota_e(A^*)B)
		=
		\sigma^\te(\iota_e(A^*)\times_\te B)
		=
		\langle[A]_{\sigma^\te,\te},[B]_{\sigma^\te,\te}\rangle_{\Hil^\M_\te}
		\,.
	\end{align*}
	This shows that the map $W_0:\E^\infty_{>}/(\NN_0\cap\E_{>}^\infty)\to \E^\infty_{>}/(\NN_\te\cap\E_{>}^\infty)$,
	\begin{align*}
		W_0[A]_\sigma
		:=
		[A]_{\sigma^\te,\te}
		\,,\qquad
		A\in\E^\infty_{>}\,,
	\end{align*}
	is well-defined and isometric. Since its domain and range are dense, we can extend it to a unitary $W:\Hil^\M_0\to\Hil^\M_\te$. Clearly $W$ satisfies $W\Om_0=W[1]_\sigma=[1]_{\sigma^\te,\te}=\Om_\te$ \eqref{eq:WOmega}.

	On $\Hil_\te$, we thus have two virtual representations $g\mapsto V_\te(g)$ and $g\mapsto WV_0(g)W^*$ of the reduced Euclidean group $E_\te(d)$. For $g$ in a sufficiently small neighborhood of the identity, these representations act according to \eqref{def:V}, $A\in\E^\infty_>$,
	\begin{align*}
		V_\te(g)[A]_{\sigma^\te,\te}
		&=
		[\alpha^\E_g(A)]_{\sigma^\te,\te}
		\,,\\
		WV_0(g)W^*[A]_{\sigma^\te,\te}
		&=
		WV_0(g)[A]_{\sigma}
		=
		W[\alpha^\E_g(A)]_\sigma
		=
		[\alpha^\E_g(A)]_{\sigma^\te,\te}
		\,,
	\end{align*}
	that is, they coincide. After analytic continuation to unitary representations of $\PG_\te(d)$, this implies \eqref{eq:U0Utheta}.

	To show that $W$ also intertwines the nets, it is sufficient to consider the time zero algebras since both $\M_\te$ and $\widetilde{\M}_\te$ satisfy the time zero condition and are generated from their time zero data by $W$-equivalent representations of $\PG_\te(d)$.

	Both $\M_\te(Z)$ and $\widetilde{\M}_\te(Z)$ are generated as von Neumann algebras from their respective smooth time zero algebras ({\em cf.} \eqref{eq:MTimeZeroSmooth}, \eqref{eq:MtildeTimeZeroAlgebras}),
	\begin{align*}
		\M_{\te,0}(S)^\infty
		&=
		\pi_{\sigma^\te,\te}(\E_0(S)^\infty)
		=
		\{\pi_{\sigma^\te,\te}(A)\,:\,A\in\E_0(S)^\infty\}
		\,,
		\\
		\widetilde{\M}_{\te,0}(S)^\infty
		&=
		\pi_\sigma(\E_0(S)^\infty)_\te
		=
		\{\pi_\sigma(A)_\te\,:\,A\in\E_0(S)^\infty\}
		\,.
	\end{align*}
	Comparing these two time zero algebras, it becomes apparent that it is sufficient to show that $W$ intertwines $\pi_{\sigma}(A)_\te$ and $\pi_{\sigma^\te,\te}(A)$ for $A\in\E_0(S)^\infty$, as then \eqref{eq:IntertwineNets} follows by continuity.

	With $A\in\E_0(S)^\infty$ and $B\in\E^\infty_>$, we compute using Lemma \ref{lemma:Warping} {\em c)}
	\begin{align*}
		W\pi_{\sigma}(A)_\te W^*[B]_{\sigma^\te,\te}
		&=
		W\pi_{\sigma}(A)_\te [B]_{\sigma}
		\\
		&=
		W\pi_{\sigma}(A)_\te \pi_\sigma(B)\Om_0
		\\
		&=
		W\pi_{\sigma}(A)_\te \pi_\sigma(B)_\te\Om_0
		\,.
	\end{align*}
	Note that the warped convolutions $\pi_\sigma(A)_\te$, $\pi_\sigma(B)_\te$ are build here with the representation $U^\M_0$ and the Minkowski inner product in the oscillatory integral \eqref{eq:Warping}. But as $\te e=0$, only spatial translations along $x\perp e$ enter \eqref{eq:WarpingReduced}. For $p,x\perp e$, the Euclidean and Minkowski inner products coincide\footnote{This is due to our choice of signature  $(-1,+1,...,+1)$. For the choice with the opposite signs one has to use the negative noncommutativity $-\te$ on the Minkowski side.}, $(p,x)^\E=(p,x)^\M$. Furthermore, for $x\perp e$, the unitaries $U^\M_0(x,1)$ implement $\alpha^\E_x$. So we can use Lemma \ref{lemma:WarpingAndRieffel} {\em b)} stating that $A\mapsto \pi_\sigma(A)_\te$ is an $\alpha$-covariant representation of the Rieffel-deformed $C^*$-algebra $\E_{0,\te}(S)$, and again Lemma \ref{lemma:Warping} {\em c)} to compute further
	\begin{align*}
		W\pi_{\sigma}(A)_\te W^*[B]_{\sigma^\te,\te}
	       &=
		W\pi_{\sigma}(A \times_\te B)_\te\Om_0
		\\
		&=
		W\pi_{\sigma}(A \times_\te B)\Om_0
		\\
		&=
		W[A \times_\te B]_{\sigma}
		\\
		&=
		[A \times_\te B]_{\sigma^\te,\te}
		\\
		&=
		\pi_{\sigma^\te,\te}(A)[B]_{\sigma^\te,\te}
		\,.
	\end{align*}
	As all operators appearing here are bounded and $\{[B]_{\sigma^\te,\te}\,:\,B\in\E^\infty_>\}\subset\Hil^\M_\te$ is dense, we obtain $W\pi_{\sigma}(A)_\te W^*=\pi_{\sigma^\te,\te}(A)$ by continuity.
\end{proof}

The unitary $W$ implements the net isomorphism in the diagram and completes the proof that the operations of Wick rotating and deforming commute.

It should be observed that the net $\M_\te\cong\widetilde{\M}_\te$ is non-local, {\em i.e.}, does {\em not} satisfy $[\M_\te(Z_1),\M_\te(Z_2)]=\{0\}$ for spacelike separated regions $Z_1,Z_2$, even if $\E$ is abelian. For deformed theories satisfying certain remnants of locality/causality, one needs more than one deformation parameter, {\em cf.} \cite{BuchholzLechnerSummers:2010}.

\section{Conclusions and Outlook}\label{section:conclusions}

Quantum field theories on noncommutative spaces differ significantly in many aspects from usual quantum field theories on Euclidean or Minkowski space. This is true in particular for their localization properties, which are expected to be weaker than in the commutative case because of space--time uncertainty relations. But depending on the model used, also the symmetry and covariance properties of field theories on noncommutative spaces are often much weaker than the full Euclidean/Poincar\'e covariance familiar from field theories on commutative space. Since both these features, covariance and locality, are important for establishing the classical ``Wick rotation'' relating fields in Lorentzian and Euclidean signature \cite{StreaterWightman:1964}, one might doubt such a connection exists in the noncommutative case. This is also the impression one gets when working on the level of perturbative renormalization, where field theories on Euclidean and Minkowski space behave quite differently \cite{Bahns:2009}.

These differences are less pronounced in the special case that the time variable still commutes with the spatial coordinates, although also in that case, the usual analytic continuation of $n$-point Wightman functions to imaginary time does not reach all the Euclidean points where the Schwinger functions are defined, and the locality and covariance properties are not better than in the case of noncommutative time. In the present work, we have shown how one can relate field theories on noncommutative spaces with commuting time despite these problems. By concentrating on analytic continuations of symmetry group representations instead of $n$-point functions, and making use of the algebraic setting of quantum field theory, we were able to establish a tight and natural connection between Euclidean and Minkowski space Moyal-deformed field theories. Such a connection might be expected by considering the analytic continuation of Wightman $n$-point functions of deformed quantum field theories \cite{Soloviev:2006,GrosseLechner:2008} to the Euclidean points in the forward tube, where they match with their deformed Euclidean counterparts \cite{Bahns:2009}. However, most Euclidean points do not lie in this domain, but only in an enlarged tube which in the commutative case can be constructed with the help of Lorentz symmetry and locality \cite{StreaterWightman:1964}. The precise relation of our present analysis of nets over cylinder regions to analyticity properties of $n$-point functions and the discussion of explicit model theories will be presented elsewhere. In case the time coordinate does not commute with the spatial coordinates, however, the approach taken here can not be used without major modifications.

In the present work we concentrated on models of quantum fields with restricted (Euclidean or Poincar\'e) symmetry. As is well known, one can also formulate fully covariant models on Moyal space (see, for example, \cite{DoplicherFredenhagenRoberts:1995,GrosseLechner:2007,Piacitelli:2010}) by including not a single noncommutativity $\te$, but rather a full orbit of these matrices in the model. In the Minkowski case, models formulated in this extended setting are not only fully Poincar\'e covariant, but also show interesting remnants of locality (``wedge locality'') \cite{GrosseLechner:2007,BuchholzLechnerSummers:2010}. We postpone a detailed discussion of the Wick rotation of fully covariant models and their residual locality properties to a future investigation.




\end{document}